\newtheorem{lem}{Lemma}[section]
\newtheorem{thm}[lem]{Theorem}
\newtheorem{obs}[lem]{\textbf{Observation}}
\title{Efficient Approximation Algorithms for Scheduling Coflows with Total Weighted Completion Time in Identical Parallel Networks}
\author{Chi-Yeh~Chen 
\\ Department of Computer Science and Information
Engineering, \\ National Cheng Kung University, \\
Taiwan, ROC. \\
chency@csie.ncku.edu.tw.}
\begin{document}

\maketitle
\begin{abstract}
This paper addresses the scheduling problem of coflows in identical parallel networks, which is a well-known $NP$-hard problem. Coflow is a relatively new network abstraction used to characterize communication patterns in data centers. We consider both flow-level scheduling and coflow-level scheduling problems. In the flow-level scheduling problem, flows within a coflow can be transmitted through different network cores. However, in the coflow-level scheduling problem, flows within a coflow must be transmitted through the same network core. The key difference between these two problems lies in their scheduling granularity. Previous approaches relied on linear programming to solve the scheduling order. In this paper, we enhance the efficiency of solving by utilizing the primal-dual method. For the flow-level scheduling problem, we propose a $(6-\frac{2}{m})$-approximation algorithm with arbitrary release times and a $(5-\frac{2}{m})$-approximation algorithm without release time, where $m$ represents the number of network cores. Additionally, for the coflow-level scheduling problem, we introduce a $(4m+1)$-approximation algorithm with arbitrary release times and a $(4m)$-approximation algorithm without release time. To validate the effectiveness of our proposed algorithms, we conduct simulations using both synthetic and real traffic traces. The results demonstrate the superior performance of our algorithms compared to previous approach, emphasizing their practical utility.

\begin{keywords}
Scheduling algorithms, approximation algorithms, coflow, datacenter network, identical parallel network.
\end{keywords}
\end{abstract}

\section{Introduction}\label{sec:introduction}
With the increasing demand for computing power, large data centers have become vital components of cloud computing. In these data centers, the advantages of application-aware network scheduling have been demonstrated, particularly for distributed applications with structured traffic patterns~\cite{Chowdhury2014, Chowdhury2015, Zhang2016, Agarwal2018}. Data-parallel computing applications such as MapReduce~\cite{Dean2008}, Hadoop~\cite{Shvachko2010, borthakur2007hadoop}, Dryad~\cite{isard2007dryad}, and Spark~\cite{zaharia2010spark} have gained significant popularity among users, resulting in a proliferation of related applications~\cite{dogar2014decentralized, chowdhury2011managing}.

During the computing stage, data-parallel applications generate a substantial amount of intermediate data (flows) that needs to be transmitted across various machines for further processing during the communication stage. Given the multitude of applications and their corresponding data transmission requirements, it is crucial for data centers to possess robust data transmission and scheduling capabilities. Understanding the communication patterns of data-parallel computing applications, the interaction among flows between two sets of machines becomes a critical aspect. This overall communication pattern within the data center is abstracted by coflow traffic~\cite{Chowdhury2012}.

A coflow refers to a collection of interdependent flows, where the completion time of the entire group relies on the completion time of the last flow within the collection~\cite{shafiee2018improved}. Previous research on coflow scheduling has predominantly focused on the single-core model~\cite{Huang2020}, which has been widely utilized in various coflow-related studies~\cite{Chowdhury2014, Chowdhury2015, Zhang2016, huang2016, Agarwal2018, Qiu2015, ahmadi2020scheduling, khuller2016brief, shafiee2018improved, shafiee2021scheduling}. However, advancements in technology have led to the emergence of data centers that operate on multiple parallel networks to enhance efficiency~\cite{Singh2015, Huang2020}. One such architecture is the identical or heterogeneous parallel network, where multiple network cores operate in parallel, providing aggregated bandwidth by concurrently serving traffic.

This paper focuses on an architecture that employs multiple identical network cores operating in parallel. The objective is to schedule coflows in these parallel networks in a way that minimizes the total weighted coflow completion time. The problem is approached from two perspectives: flow-level scheduling and coflow-level scheduling. In the flow-level scheduling problem, the flows within a coflow can be distributed across different network cores, but the data in each flow is restricted to a single core. In contrast, the coflow-level scheduling problem requires that all flows within a coflow be distributed exclusively within the same network core. The key difference between these two problems lies in their scheduling granularity. Coarse-grained scheduling, associated with the coflow-level scheduling problem, enables faster resolution but yields relatively inferior scheduling outcomes. On the other hand, fine-grained scheduling, which pertains to the flow-level scheduling problem, takes more time to solve but produces superior scheduling outcomes.
\begin{table*}[!ht]
\caption{Theoretical Results}
\centering
\begin{tabular}{|c|c|c|c|}
\hline
       Model                        & Release Times & Approximation Ratio &  \\ \hhline{|=|=|=|=|}
 flow-level scheduling problem              &    $\surd$                            &  $6-\frac{2}{m}$    &  Thm~\ref{thm:thm1} \\ \hline
 flow-level scheduling problem              &                                       &  $5-\frac{2}{m}$    &  Thm~\ref{thm:thm2}\\ \hline
 coflow-level scheduling problem            &    $\surd$                            &  $4m+1$             &  Thm~\ref{thm:thm21}\\ \hline
 coflow-level scheduling problem            &                                       &  $4m$               &  Thm~\ref{thm:thm22}\\ \hline
\end{tabular}
\label{tab:results}
\end{table*}

\subsection{Related Work}
Chowdhury and Stoica~\cite{Chowdhury2012} initially introduced the concept of coflow abstraction to characterize communication patterns within data centers. The scheduling problem for coflows has been proven to be strongly NP-hard, necessitating the use of efficient approximation algorithms instead of exact solutions. Due to the inapproximability of the concurrent open shop problem~\cite{Bansal2010, Sachdeva2013}, it is NP-hard to approximate the coflow scheduling problem within a factor better than $2-\epsilon$. Since the proposal of the coflow abstraction, numerous investigations have been conducted on coflow scheduling~\cite{Chowdhury2014, Chowdhury2015, Qiu2015, zhao2015rapier, shafiee2018improved, ahmadi2020scheduling}. Qiu~\textit{et al.}\cite{Qiu2015} introduced the first deterministic polynomial-time approximation algorithm, and subsequent efforts~\cite{Qiu2015, khuller2016brief, shafiee2018improved, ahmadi2020scheduling} have further improved the approximation ratios achievable within polynomial time. For instance, the best approximation ratio has been enhanced from $\frac{64}{3}$ to 4 when coflows are released at time zero, and from $\frac{76}{3}$ to 5 when coflows are released at arbitrary times. In the context of single coflow scheduling within a heterogeneous parallel network, Huang \textit{et al.}\cite{Huang2020} proposed an $O(m)$-approximation algorithm, where $m$ represents the number of network cores. When multiple coflows with precedence constraints exist within a job, Shafiee and Ghaderi\cite{shafiee2021scheduling} devised a polynomial-time algorithm with an approximation ratio of $O(\tilde{\mu} \log(N)/\log(\log(N)))$, where $\tilde{\mu}$ denotes the maximum number of coflows in a job and $N$ signifies the number of servers.

\subsection{Our Contributions}
This paper focuses on addressing the coflow scheduling problem within identical parallel networks and presents a range of algorithms and corresponding results. The specific contributions of this study are outlined below:

\begin{itemize}
\item For the flow-level scheduling problem, we introduce an approximation algorithm that achieves approximation ratios of $6-\frac{2}{m}$ and $5-\frac{2}{m}$ for arbitrary and zero release times, respectively. 

\item For the coflow-level scheduling problem, we propose an approximation algorithm that achieves approximation ratios of $4m+1$ and $4m$ for arbitrary and zero release times, respectively.
\end{itemize}

A summary of our theoretical findings is provided in Table~\ref{tab:results}.

\subsection{Organization}
The structure of this paper is as follows. In Section~\ref{sec:Preliminaries}, we provide an introduction to the fundamental notations and preliminary concepts that will be utilized in subsequent sections. Next, we present our primary algorithms in the following sections: Section~\ref{sec:Algorithm1} outlines the algorithm for addressing the flow-level scheduling problem, while Section~\ref{sec:Algorithm2} elaborates on the algorithm designed for the coflow-level scheduling problem. In Section~\ref{sec:Results}, we conduct a comparative analysis to evaluate the performance of our proposed algorithms against that of the previous algorithm. Finally, in Section~\ref{sec:Conclusion}, we summarize our findings and draw meaningful conclusions.

\section{Notation and Preliminaries}\label{sec:Preliminaries}
The identical parallel networks are represented as a collection of $m$ large-scale non-blocking switches, each with dimensions of $N \times N$. In this configuration, $N$ input links are connected to $N$ source servers, and $N$ output links are connected to $N$ destination servers. Each switch corresponds to a network core, making the model straightforward and practical. Network architectures like Fat-tree or Clos~\cite{al2008scalable, greenberg2009vl2} can be utilized to establish networks that offer complete bisection bandwidth. In this parallel networks configuration, every parallel switch is linked to $N$ source servers and $N$ destination servers. Specifically, the $i$-th input or $j$-th output port of each switch is connected to the $i$-th source server or $j$-th destination server, respectively. As a result, each source (or destination) server possesses $m$ simultaneous uplinks (or downlinks), with each link potentially comprising multiple physical connections in the actual topology~\cite{Huang2020}. Let $\mathcal{I}$ denote the set of source servers, and $\mathcal{J}$ denote the set of destination servers. The network core can be viewed as a bipartite graph, with $\mathcal{I}$ on one side and $\mathcal{J}$ on the other. For simplicity, we assume that all network cores are identical, and all links within each network core possess the same capacity or speed.

A coflow is a collection of independent flows, and its completion time is determined by the completion time of the last flow in the set. The coflow $k$ is represented by an $N \times N$ demand matrix $D^{(k)}=\left(d_{i,j,k}\right)_{i,j=1}^{N}$, where $d_{i,j,k}$ denotes the size of the flow to be transferred from input $i$ to output $j$ within coflow $k$. Since all network cores are identical, the flow size can be considered equivalent to the transmission time. Each flow is identified by a triple $(i, j, k)$, where $i \in \mathcal{I}$ represents the source node, $j \in \mathcal{J}$ represents the destination node, and $k$ corresponds to the coflow. Furthermore, we assume that flows are composed of discrete data units, resulting in integer sizes. For the sake of simplicity, we assume that all flows within a coflow are initiated simultaneously, as demonstrated in~\cite{Qiu2015}.

This paper addresses the problem of coflow scheduling with release times. The problem involves a set of coflows denoted by $\mathcal{K}$, where coflow $k$ is released into the system at time $r_k$. Consequently, scheduling for coflow $k$ is only possible after time $r_k$. The completion time of coflow $k$, denoted as $C_k$, represents the time required for all flows within the coflow to finish processing. Each coflow $k\in \mathcal{K}$ is assigned a positive weight $w_k$. The objective is to schedule the coflows in an identical parallel network to minimize the total weighted completion time of the coflows, represented by $\sum_{k\in \mathcal{K}} w_kC_k$. To aid in explanation, we assign different meanings to the same symbols with different subscript symbols. Subscript $i$ represents the index of the source (or the input port), subscript $j$ represents the index of the destination (or the output port), and subscript $k$ represents the index of the coflow. For instance, $\mathcal{F}_{i}$ denotes the set of flows with source $i$, and $\mathcal{F}_{j}$ represents the set of flows with destination $j$. The notation and terminology used in this paper are summarized in Table~\ref{tab:notations}.

\begin{table}[!ht]
\caption{Notation and Terminology}
    \centering
        \begin{tabular}{||c|p{4.5in}||}
    \hline
     $m$      & The number of network cores.          \\
    \hline    
     $N$      & The number of input/output ports.         \\
    \hline
     $n$      & The number of coflows.         \\
    \hline
     $\mathcal{I}, \mathcal{J}$ & The source server set and the destination server set.         \\
    \hline    
     $\mathcal{K}$ & The set of coflows.         \\
    \hline
     $D^{(k)}$     & The demand matrix of coflow $k$. \\
    \hline    
     $d_{i,j,k}$     & The size of the flow to be transferred from input $i$ to output $j$ in coflow $k$.   \\
    \hline     
     $C_k$     & The completion time of coflow $k$.   \\
    \hline     
     $C_{i,j,k}$ & The completion time of flow $(i, j, k)$. \\
    \hline     
     $r_k$     & The released time of coflow $k$.  \\
    \hline     
     $w_{k}$   &  The weight of coflow $k$. \\
    \hline     
		 $\mathcal{F}_{i}$ & $\mathcal{F}_{i}=\left\{(i, j, k)| d_{i,j,k}>0, \forall k\in \mathcal{K}, \forall j\in \mathcal{J} \right\}$ is the set of flows with source $i$. \\
		\hline 					
		 $\mathcal{F}_{j}$ & $\mathcal{F}_{j}=\left\{(i, j, k)| d_{i,j,k}>0, \forall k\in \mathcal{K}, \forall i\in \mathcal{I} \right\}$ is the set of flows with destination $j$. \\
		\hline 					
		 $d(S), d^2(S)$ & $d(S)=\sum_{(i, j, k)\in S} d_{i,j,k}$ and $d^2(S)=\sum_{(i, j, k)\in S} d_{i,j,k}^{2}$ for any subset $S\subseteq \mathcal{F}_{i}$ (or $S\subseteq \mathcal{F}_{j}$). \\
		\hline 					
		 $f(S)$ & $f(S) = \frac{d(S)^2+ d^2(S)}{2m}$ for any subset $S\subseteq \mathcal{F}_{i}$ (or $S\subseteq \mathcal{F}_{j}$). \\
		\hline 				
			$L_{i,S,k}$ & $L_{i,S,k}=\sum_{(i',j',k')\in S/i'=i,k'=k}d_{i',j',k'}$ is the total load on input port $i$ for coflow $k$ in the set $S$. \\
		\hline 				
			$L_{j,S,k}$ & $L_{j,S,k}=\sum_{(i',j',k')\in S/j'=j,k'=k}d_{i',j',k'}$ is the total load on output port $j$ for coflow $k$ in the set $S$. \\
		\hline 				
			$L_{i,k}$ & $L_{i,k}=\sum_{j\in \mathcal{J}} d_{i,j,k}$ is the total load of flows from the coflow $k$ at input port $i$. \\
		\hline 				
			$L_{j,k}$ & $L_{j,k}=\sum_{i\in \mathcal{I}} d_{i,j,k}$ is the total load of flows from the coflow $k$ at output port $j$. \\
		\hline 				
			$L_{i}, L_{j}$ & $L_{i} = \sum_{k\in \mathcal{K}}L_{i,k}$ and $L_{j} = \sum_{k\in \mathcal{K}}L_{j,k}$. \\
		\hline 				
			$S_{i,k}$ & $S_{i,k}$ is the set of flows from the first $k$ coflows at input port $i$. \\
		\hline 				
			$S_{j,k}$ & $S_{j,k}$ is the set of flows from the first $k$ coflows at output port $j$. \\
		\hline 				
			$f_{i}(S)$ & $f_{i}(S) = \frac{\sum_{k\in S} L_{i,k}^2+\left(\sum_{k\in S} L_{i,k}\right)^2}{2m}$ for any subset $S\subseteq \mathcal{K}$. \\
		\hline 				
			$f_{j}(S)$ & $f_{j}(S) = \frac{\sum_{k\in S} L_{j,k}^2+\left(\sum_{k\in S} L_{j,k}\right)^2}{2m}$ for any subset $S\subseteq \mathcal{K}$. \\
		\hline 				
			$S_{k}$ & $S_{k}=\left\{1, 2, \ldots, k\right\}$ is the set of first $k$ coflows. \\
		\hline 				
			$L_{i}(S_{k}), L_{j}(S_{k})$ & $L_{i}(S_{k})=\sum_{k'\leq k} L_{i, k'}$ and $L_{j}(S_{k})=\sum_{k'\leq k} L_{j, k'}$. \\
		\hline 				
			$\mu_1(k)$ & $\mu_1(k)$ is the input port in $S_{k}$ with the highest load. \\
		\hline 				
			$\mu_2(k)$ & $\mu_2(k)$ is the output port in $S_{k}$ with the highest load. \\
		\hline 				
        \end{tabular}
    \label{tab:notations}
\end{table}


\section{Approximation Algorithm for the Flow-level Scheduling Problem}\label{sec:Algorithm1}
This section addresses the flow-level scheduling problem, which enables different flows in a coflow to be transmitted through distinct cores. We assume that coflows are transmitted at the flow level, ensuring that the data within a flow is allocated to the same core. We define $\mathcal{F}_{i}$ as the collection of flows with source $i$, represented by $\mathcal{F}_{i}=\left\{(i, j, k)| d_{i,j,k}>0, \forall k\in \mathcal{K}, \forall j\in \mathcal{J} \right\}$, and $\mathcal{F}_{j}$ as the set of flows with destination $j$, given by $\mathcal{F}_{j}=\left\{(i, j, k)| d_{i,j,k}>0, \forall k\in \mathcal{K}, \forall i\in \mathcal{I} \right\}$. For any subset $S\subseteq \mathcal{F}_{i}$ (or $S\subseteq \mathcal{F}_{j}$), we define $d(S)=\sum_{(i, j, k)\in S} d_{i,j,k}$ as the sum of data size over all flows in $S$ and $d^2(S)=\sum_{(i, j, k)\in S} d_{i,j,k}^{2}$ as the sum of squares of data size over all flows in $S$. Let
\begin{eqnarray*}
f(S) = \frac{d(S)^2+ d^2(S)}{2m}.
\end{eqnarray*}
The problem can be formulated as a linear programming relaxation, given by:
\begin{subequations}\label{coflow:main}
\begin{align}
& \text{min}  && \sum_{k \in \mathcal{K}} w_{k} C_{k}     &   & \tag{\ref{coflow:main}} \\
& \text{s.t.} && C_{k} \geq C_{i,j,k}, && \forall k\in \mathcal{K}, \forall i\in \mathcal{I}, \forall j\in \mathcal{J} \label{coflow:a} \\
&  && C_{i,j,k}\geq r_k+d_{i,j,k}, && \forall k\in \mathcal{K}, \forall i\in \mathcal{I}, \forall j\in \mathcal{J} \label{coflow:b} \\
&  && \sum_{(i, j, k)\in S}d_{i,j,k}C_{i,j,k}\geq f(S),&& \forall i\in \mathcal{I}, \forall S\subseteq \mathcal{F}_{i} \label{coflow:c} \\
&  && \sum_{(i, j, k)\in S}d_{i,j,k}C_{i,j,k}\geq f(S),&& \forall j\in \mathcal{J}, \forall S\subseteq \mathcal{F}_{j} \label{coflow:d} 
\end{align}
\end{subequations}

In the linear program (\ref{coflow:main}), the variable $C_{k}$ represents the completion time of coflow $k$ in the schedule, and $C_{i,j,k}$ denotes the completion time of flow $(i, j, k)$. Constraint (\ref{coflow:a}) specifies that the completion time of coflow $k$ is limited by all its flows, while constraint (\ref{coflow:b}) ensures that the completion time of any flow $(i, j, k)$ is at least its release time $r_k$ plus its load. Constraints (\ref{coflow:c}) and (\ref{coflow:d}) serve to impose a lower bound on the completion time variable in the input port and output port, respectively.

Let $L_{i,S,k}=\sum_{(i',j',k')\in S/i'=i,k'=k}d_{i',j',k'}$ be the total load on input port $i$ for coflow $k$ in the set $S$. Let $L_{j,S,k}=\sum_{(i',j',k')\in S/j'=j,k'=k}d_{i',j',k'}$ be the total load on output port $j$ for coflow $k$ in the set $S$. The dual linear program is given by
\begin{subequations}\label{coflow:dual}
\begin{align}
& \text{max}  && \sum_{k \in \mathcal{K}}\sum_{i \in \mathcal{I}}\sum_{j \in \mathcal{J}} \alpha_{i, j, k}(r_k+d_{i,j,k})+\sum_{i \in \mathcal{I}}\sum_{S \subseteq \mathcal{F}_{i}}\beta_{i,S} f(S) +\sum_{j \in \mathcal{J}}\sum_{S \subseteq \mathcal{F}_{j}}\beta_{j,S} f(S)     &   & \tag{\ref{coflow:dual}} \\
& \text{s.t.} && \sum_{i \in \mathcal{I}}\sum_{j \in \mathcal{J}} \alpha_{i, j, k} +\sum_{i \in \mathcal{I}}\sum_{S\subseteq \mathcal{F}_{i}}\beta_{i,S}L_{i,S,k}+\sum_{j \in \mathcal{J}}\sum_{S\subseteq \mathcal{F}_{j}}\beta_{j,S}L_{j,S,k}\leq w_{k}, && \forall k\in \mathcal{K} \label{coflow:dual:a} \\
&  && \alpha_{i, j, k} \geq 0, && \forall k\in \mathcal{K}, \forall i\in \mathcal{I}, \forall j\in \mathcal{J} \label{coflow:dual:b} \\
&  && \beta_{i, S}\geq 0,   &&  \forall i\in \mathcal{I}, \forall S\subseteq \mathcal{F}_{i} \label{coflow:dual:c} \\
&  && \beta_{j, S}\geq 0,   &&  \forall j\in \mathcal{J}, \forall S\subseteq \mathcal{F}_{j} \label{coflow:dual:d} 
\end{align}
\end{subequations}

It is worth noting that for each flow $(i, j, k)$, there is a corresponding dual variable $\alpha_{i, j, k}$, and for every coflow $k$, there exists a corresponding constraint. Moreover, for any subset $S\subseteq \mathcal{F}_{i}$ (or $S\subseteq \mathcal{F}_{j}$) of flows, there is a dual variable $\beta_{i, S}$ (or $\beta_{j, S}$). Importantly, it should be emphasized that the cost of any feasible dual solution serves as a lower bound for $OPT$, which represents the cost of an optimal solution.
\begin{algorithm*}
\caption{Permuting Coflows}
    \begin{algorithmic}[1]
		    \STATE $\mathcal{K}$ is the set of unscheduled coflows and initially $K=\left\{1,2,\ldots,n\right\}$
				\STATE $\mathcal{G}_{i}=\left\{(i, j, k)| d_{i,j,k}>0, \forall k\in \mathcal{K}, \forall j\in \mathcal{J} \right\}$ 
				\STATE $\mathcal{G}_{j}=\left\{(i, j, k)| d_{i,j,k}>0, \forall k\in \mathcal{K}, \forall i\in \mathcal{I} \right\}$
				\STATE $\alpha_{i, j, k}=0$ for all $k\in \mathcal{K}, i\in \mathcal{I}, j\in \mathcal{J}$
				\STATE $\beta_{i, S}= 0$ for all $i\in \mathcal{I}, S\subseteq \mathcal{F}_{i}$
				\STATE $\beta_{j, S}= 0$ for all $j\in \mathcal{J}, S\subseteq \mathcal{F}_{j}$
				\STATE $L_{i,k}=\sum_{j\in \mathcal{J}} d_{i,j,k}$ for all $k\in \mathcal{K}, i\in \mathcal{I}$
				\STATE $L_{j,k}=\sum_{i\in \mathcal{I}} d_{i,j,k}$ for all $k\in \mathcal{K}, j\in \mathcal{J}$
				\STATE $L_{i} = \sum_{k\in \mathcal{K}}L_{i,k}$ for all $i\in \mathcal{I}$
				\STATE $L_{j} = \sum_{k\in \mathcal{K}}L_{j,k}$ for all $j\in \mathcal{J}$		
				\FOR{$r=n, n-1, \ldots, 1$}
				    \STATE $\mu_1(r)=\arg\max_{i\in \mathcal{I}} L_{i}$
				    \STATE $\mu_2(r)=\arg\max_{j\in \mathcal{J}} L_{j}$
						\STATE $k=\arg\max_{\ell\in \mathcal{K}} r_{\ell}$
						\IF{$L_{\mu_1(r)}>L_{\mu_2(r)}$}
                \IF{$r_{k}>\frac{\kappa\cdot L_{\mu_1(r)}}{m}$}
						        \STATE $\alpha_{\mu_1(r), 1, k}=w_{k}-\sum_{i \in \mathcal{I}}\sum_{S\subseteq \mathcal{F}_{i}}\beta_{i,S}L_{i,S,k}-\sum_{j \in \mathcal{J}}\sum_{S\subseteq \mathcal{F}_{j}}\beta_{j,S}L_{j,S,k}$
										\STATE $\sigma(r)\leftarrow k$
						    \ELSIF{$r_{k}\leq\frac{\kappa\cdot L_{\mu_1(r)}}{m}$}
						        \STATE $k'=\arg\min_{k\in \mathcal{K}}\left\{\frac{w_{k}-\sum_{i \in \mathcal{I}}\sum_{S\subseteq \mathcal{F}_{i}}\beta_{i,S}L_{i,S,k}-\sum_{j \in \mathcal{J}}\sum_{S\subseteq \mathcal{F}_{j}}\beta_{j,S}L_{j,S,k}}{L_{\mu_1(r),\mathcal{G}_{\mu_1(r)},k}}\right\}$
										\STATE $\beta_{\mu_1(r),\mathcal{G}_{\mu_1(r)}}=\frac{w_{k'}-\sum_{i \in \mathcal{I}}\sum_{S\subseteq \mathcal{F}_{i}}\beta_{i,S}L_{i,S,k'}-\sum_{j \in \mathcal{J}}\sum_{S\subseteq \mathcal{F}_{j}}\beta_{j,S}L_{j,S,k'}}{L_{\mu_1(r),\mathcal{G}_{\mu_1(r)},k'}}$
										\STATE $\sigma(r)\leftarrow k'$
						    \ENDIF						
						\ELSE
                \IF{$r_{k}>\frac{\kappa\cdot L_{\mu_2(r)}}{m}$}
						        \STATE $\alpha_{1, \mu_2(r), k}=w_{k}-\sum_{i \in \mathcal{I}}\sum_{S\subseteq \mathcal{F}_{i}}\beta_{i,S}L_{i,S,k}-\sum_{j \in \mathcal{J}}\sum_{S\subseteq \mathcal{F}_{j}}\beta_{j,S}L_{j,S,k}$
										\STATE $\sigma(r)\leftarrow k$
						    \ELSIF{$r_{k}\leq\frac{\kappa\cdot L_{\mu_2(r)}}{m}$}
						        \STATE $k'=\arg\min_{k\in \mathcal{K}}\left\{\frac{w_{k}-\sum_{i \in \mathcal{I}}\sum_{S\subseteq \mathcal{F}_{i}}\beta_{i,S}L_{i,S,k}-\sum_{j \in \mathcal{J}}\sum_{S\subseteq \mathcal{F}_{j}}\beta_{j,S}L_{j,S,k}}{L_{\mu_2(r),\mathcal{G}_{\mu_2(r)},k}}\right\}$
										\STATE $\beta_{\mu_2(r),\mathcal{G}_{\mu_2(r)}}=\frac{w_{k'}-\sum_{i \in \mathcal{I}}\sum_{S\subseteq \mathcal{F}_{i}}\beta_{i,S}L_{i,S,k'}-\sum_{j \in \mathcal{J}}\sum_{S\subseteq \mathcal{F}_{j}}\beta_{j,S}L_{j,S,k'}}{L_{\mu_2(r),\mathcal{G}_{\mu_2(r)},k'}}$
										\STATE $\sigma(r)\leftarrow k'$
						    \ENDIF						
						\ENDIF
						\STATE $\mathcal{K}\leftarrow \mathcal{K}\setminus \sigma(r)$
     				\STATE $\mathcal{G}_{i}=\left\{(i, j, k)| d_{i,j,k}>0, \forall k\in \mathcal{K}, \forall j\in \mathcal{J} \right\}$ 
				    \STATE $\mathcal{G}_{j}=\left\{(i, j, k)| d_{i,j,k}>0, \forall k\in \mathcal{K}, \forall i\in \mathcal{I} \right\}$
    				\STATE $L_{i} = L_{i}-L_{i,\sigma(r)}$ for all $i\in \mathcal{I}$
		    		\STATE $L_{j} = L_{j}-L_{j,\sigma(r)}$ for all $j\in \mathcal{J}$				
				\ENDFOR
   \end{algorithmic}
\label{Alg_dual}
\end{algorithm*}

The primal-dual algorithm (Algorithm~\ref{Alg_dual}) is inspired by the work of Davis \textit{et al.} \cite{DAVIS2013121} and Ahmadi \textit{et al.} \cite{ahmadi2020scheduling}. In this algorithm, a feasible schedule is constructed iteratively from right to left, determining the processing order of coflows, starting from the last coflow and moving towards the first. Let us consider a specific iteration. At the beginning of this iteration, let $\mathcal{K}$ denote the set of coflows that have not been scheduled yet, and let $k$ represent the coflow with the largest release time. In each iteration, a decision needs to be made regarding the increase of a $\alpha$ dual variable or a $\beta$ variable. The dual LP serves as a guide for this decision-making process. If the release time $r_k$ is very large, raising $\alpha$ yields significant gains in the dual objective function value. Conversely, if $L_{\mu_1(r)}$ (or $L_{\mu_2(r)}$ if $L_{\mu_2(r)}\geq L_{\mu_1(r)}$) is large, raising $\beta$ leads to substantial improvements in the objective value. Let $\kappa$ be a constant to be optimized later. If $r_{k}>\frac{\kappa\cdot L_{\mu_1(r)}}{m}$ (or $r_{k}>\frac{\kappa\cdot L_{\mu_2(r)}}{m}$ if $L_{\mu_2(r)}\geq L_{\mu_1(r)}$), the dual variable $\alpha$ is increased until the dual constraint for coflow $k$ becomes tight. Consequently, coflow $k$ is scheduled to be processed as early as possible and before any previously scheduled coflows.

If $r_{k}\leq \frac{\kappa\cdot L_{\mu_1(r)}}{m}$ (or $r_{k}\leq\frac{\kappa\cdot L_{\mu_2(r)}}{m}$ if $L_{\mu_2(r)}\geq L_{\mu_1(r)}$), the dual variable $\beta_{\mu_1(r),\mathcal{G}{i}}$ (or $\beta_{\mu_2(r),\mathcal{G}_{j}}$ if $L_{\mu_2(r)}\geq L_{\mu_1(r)}$) is increased until one of the constraints becomes tight for a coflow $k'\in \mathcal{K}$. Coflow $k'$ is then scheduled to be processed as early as possible and before any previously scheduled coflows. Appendix~\ref{appendix:a} presents a simple and equivalent algorithm, which is Algorithm~\ref{Alg_dual1-1}. This algorithm has a space complexity of $O(Nn)$ and a time complexity of $O(n^2)$, where $N$ represents the number of input/output ports and $n$ represents the number of coflows.

\begin{algorithm}
\caption{Flow-Driven-List-Scheduling}
    \begin{algorithmic}[1]
				\STATE Let $load_{I}(i,h)$ be the load on the $i$-th input port of the network core $h$
				\STATE Let $load_{O}(j,h)$ be the load on the $j$-th output port of the network core $h$
				\STATE Let $\mathcal{A}_h$ be the set of flows allocated to network core $h$
				\STATE Both $load_{I}$ and $load_{O}$ are initialized to zero and $\mathcal{A}_h=\emptyset$ for all $h\in [1, m]$
				\FOR{$k=1, 2, \ldots, n$}
				\FOR{every flow $(i, j, k)$ in non-increasing order of $d_{i, j, k}$, breaking ties arbitrarily}
				    \STATE $h^*=\arg \min_{h\in [1, m]}\left(load_{I}(i,h)+load_{O}(j,h)\right)$
						\STATE $\mathcal{A}_{h^*}=\mathcal{A}_{h^*}\cup \left\{(i, j, k)\right\}$
						\STATE $load_{I}(i,h^*)=load_{I}(i,h^*)+d_{i, j, k}$ and $load_{O}(j,h^*)=load_{O}(j,h^*)+d_{i, j, k}$
				\ENDFOR
				\ENDFOR
				\FOR{each $h\in [1, m]$ do in parallel}
				    \STATE wait until the first coflow is released
						\WHILE{there is some incomplete flow}
						    \FOR{$k'=1, 2, \ldots, n$}
                \FOR{every released and incomplete flow $(i, j, k=k')\in \mathcal{A}_{h}$ in non-increasing order of $d_{i, j, k}$, breaking ties arbitrarily}
										\IF{the link $(i, j)$ is idle}
										    \STATE schedule flow $f$\label{alg1-1}
										\ENDIF
								\ENDFOR
								\ENDFOR
								\WHILE{no new flow is completed or released}
								    \STATE transmit the flows that get scheduled in line \ref{alg1-1} at maximum rate 1.
								\ENDWHILE
						\ENDWHILE
				\ENDFOR
   \end{algorithmic}
\label{Alg1}
\end{algorithm}

The flow-driven-list-scheduling algorithm, presented in Algorithm~\ref{Alg1}, utilizes a list scheduling rule. We assume, without loss of generality, that the coflows are ordered based on the permutation provided by Algorithm~\ref{Alg_dual}, where $\sigma(k)=k, \forall k\in \mathcal{K}$. The coflows are scheduled sequentially in this predefined order, and within each coflow, the flows are scheduled in non-increasing order of byte size, breaking ties arbitrarily. For each flow $(i, j, k)$, the algorithm considers all congested flows that are scheduled before it. It then identifies the least loaded network core $h^*$ and assigns flow $(i, j, k)$ to this core in order to minimize its completion time. The specific steps involved in this process are outlined in lines 5-11 of the algorithm. Lines 12-26 of the algorithm are adapted from the work of Shafiee and Ghaderi~\cite{shafiee2018improved}. It is worth noting that all flows are transmitted in a preemptible manner.

\subsection{Analysis}
In this section, we establish the efficacy of the proposed algorithm by proving its approximation ratios. Specifically, we demonstrate that the algorithm achieves an approximation ratio of $6-\frac{2}{m}$ for arbitrary release times and an approximation ratio of $5-\frac{2}{m}$ in the absence of release times. It is important to note that we assume the coflows are arranged in the order determined by the permutation generated by Algorithm~\ref{Alg_dual}, i.e., $\sigma(k)=k, \forall k\in \mathcal{K}$. Let $S_{k}=\left\{1, 2, \ldots, k\right\}$ denote the set of first $k$ coflows.
Let $S_{i,k}$ denote the set of flows from the first $k$ coflows at input port $i$, that is 
\begin{eqnarray*}
S_{i,k}=\left\{(i, j, k')| d_{i,j,k'}>0, \forall k'\in \left\{1,\ldots,k\right\}, \forall j\in \mathcal{J} \right\}.
\end{eqnarray*}
Similarly, let $S_{j,k}$ represent the set of flows from the first $k$ coflows at output port $j$, that is
\begin{eqnarray*}
S_{j,k}=\left\{(i, j, k')| d_{i,j,k'}>0, \forall k'\in \left\{1,\ldots,k\right\}, \forall i\in \mathcal{I} \right\}.
\end{eqnarray*}
Let $\beta_{i,k}=\beta_{i,S_{i,k}}$ and $\beta_{j,k}=\beta_{j,S_{j,k}}$.
Additionally, let $\mu_1(k)$ denote the input port with the highest load in $S_{k}$, and $\mu_2(k)$ denote the output port with the highest load in $S_{k}$.
Recall that $d(S)$ represents the sum of loads for all flows in subset $S$. Therefore, $d(S_{i,k})$ corresponds to the total load of flows from the first $k$ coflows at input port $i$, and $d(S_{j,k})$ corresponds to the total load of flows from the first $k$ coflows at output port $j$.
Let $L_{i,k}=\sum_{j\in \mathcal{J}} d_{i,j,k}$ be the total load of flows from the coflow $k$ at input port $i$ and $L_{j,k}=\sum_{i\in \mathcal{I}} d_{i,j,k}$ be the total load of flows from the coflow $k$ at output port $j$.

Let us begin by presenting several key observations regarding the primal-dual algorithm.
\begin{obs}\label{obs:1}
The following statements hold.

\begin{enumerate}
\item Every nonzero $\beta_{i,S}$ can be written as $\beta_{\mu_1(k),k}$ for some coflow $k$. \label{obs:1-1}
\item Every nonzero $\beta_{j,S}$ can be written as $\beta_{\mu_2(k),k}$ for some coflow $k$. \label{obs:1-2}
\item For every set $S_{\mu_1(k),k}$ that has a nonzero $\beta_{\mu_1(k),k}$ variable, if $k' \leq k$ then $r_{k'}\leq \frac{\kappa\cdot d(S_{\mu_1(k),k})}{m}$. \label{obs:1-3}
\item For every set $S_{\mu_2(k),k}$ that has a nonzero $\beta_{\mu_2(k),k}$ variable, if $k' \leq k$ then $r_{k'}\leq \frac{\kappa\cdot d(S_{\mu_2(k),k})}{m}$. \label{obs:1-4}
\item For every coflow $k$ that has a nonzero $\alpha_{\mu_1(k), 1, k}$, $r_{k}>\frac{\kappa\cdot d(S_{\mu_1(k),k})}{m}$. \label{obs:1-5}
\item For every coflow $k$ that has a nonzero $\alpha_{1, \mu_2(k), k}$, $r_{k}>\frac{\kappa\cdot d(S_{\mu_2(k),k})}{m}$. \label{obs:1-6}
\item For every coflow $k$ that has a nonzero $\alpha_{\mu_1(k), 1, k}$ or a nonzero $\alpha_{1, \mu_2(k), k}$, if $k'\leq k$ then $r_{k'}\leq r_{k}$. \label{obs:1-7}
\end{enumerate}
\end{obs}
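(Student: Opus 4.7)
My plan is to verify all seven statements by a direct case analysis of the two branches in Algorithm~\ref{Alg_dual}, after establishing one structural fact that ties the algorithm's local variables to the notation of the analysis subsection. Specifically, I will first observe that iteration $r$ of the outer loop starts with exactly $r$ unscheduled coflows in $\mathcal{K}$ (since one coflow is removed per iteration in line~33), and that under the standing convention $\sigma(k)=k$ these are precisely the coflows $\{1,2,\ldots,r\}$. Consequently, at iteration $r$ the quantity $L_{\mu_1(r)}$ equals $d(S_{\mu_1(r),r})$, the set $\mathcal{G}_{\mu_1(r)}$ equals $S_{\mu_1(r),r}$, and analogously for $\mu_2(r)$; moreover, the algorithm's definitions of $\mu_1(r),\mu_2(r)$ (lines 12--13) coincide with the heaviest-loaded port definitions in Table~\ref{tab:notations} restricted to the first $r$ coflows.

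With this identification in hand, parts~\ref{obs:1-1} and \ref{obs:1-2} are immediate from inspection: the only lines that modify $\beta$-variables are lines~20 and 28, which raise $\beta_{\mu_1(r),\mathcal{G}_{\mu_1(r)}}$ or $\beta_{\mu_2(r),\mathcal{G}_{\mu_2(r)}}$, and by the identification above these are exactly $\beta_{\mu_1(r),r}$ or $\beta_{\mu_2(r),r}$. For parts~\ref{obs:1-3} and \ref{obs:1-4}, I will note that such a $\beta$-update is triggered only when the coflow $k$ with maximum release time in $\mathcal{K}$ satisfies $r_k \leq \tfrac{\kappa \cdot L_{\mu_1(r)}}{m}$. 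Rewriting the right-hand side as $\tfrac{\kappa \cdot d(S_{\mu_1(r),r})}{m}$ and using $\mathcal{K}=\{1,\ldots,r\}$ together with the maximality of $r_k$ gives $r_{k'}\leq r_k \leq \tfrac{\kappa \cdot d(S_{\mu_1(r),r})}{m}$ for every $k'\leq r$; the $\mu_2$ case is symmetric.

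For parts~\ref{obs:1-5}--\ref{obs:1-7}, I will track when the algorithm sets $\alpha_{\mu_1(r),1,k}$ or $\alpha_{1,\mu_2(r),k}$: this happens only in the branch where $r_k > \tfrac{\kappa \cdot L_{\mu_1(r)}}{m}$ (respectively $\tfrac{\kappa \cdot L_{\mu_2(r)}}{m}$) and the scheduled coflow $\sigma(r)$ is set to $k$. Under $\sigma(k)=k$ we therefore have $k=r$, which converts the triggering inequality into $r_k > \tfrac{\kappa \cdot d(S_{\mu_1(k),k})}{m}$, proving parts~\ref{obs:1-5} and \ref{obs:1-6}. Part~\ref{obs:1-7} follows from the same branch: since $k$ was selected as the argmax of $r_\ell$ over $\ell\in\mathcal{K}=\{1,\ldots,r\}$, every $k'\leq k=r$ lies in $\mathcal{K}$ and so $r_{k'}\leq r_k$.

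The proof is essentially bookkeeping; the only point requiring real care is the alignment between the algorithm's dynamic sets ($\mathcal{K},\mathcal{G}_{\mu_1(r)},L_{\mu_1(r)}$) and the static quantities ($S_{\mu_1(k),k}, d(S_{\mu_1(k),k})$) used in the analysis section, which rests on the reindexing $\sigma(k)=k$ and on the fact that the dual updates are performed before the removal steps in lines~33--37. Once that alignment is confirmed, each of the seven claims reduces to reading off the appropriate if/else guard in the pseudocode.
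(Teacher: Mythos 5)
Your proposal is correct and follows the same route the paper intends: the paper simply asserts that all seven statements ``can be directly inferred from Algorithm~\ref{Alg_dual},'' and your case analysis of the two branches, together with the identification $\mathcal{K}=\{1,\ldots,r\}$, $\mathcal{G}_{\mu_1(r)}=S_{\mu_1(r),r}$, and $L_{\mu_1(r)}=d(S_{\mu_1(r),r})$ under the convention $\sigma(k)=k$, is precisely the bookkeeping that inference requires. Your write-up in fact supplies the details the paper omits, and each step (the $\beta$-updates only occurring in lines 20/28, the triggering guards, and the maximality of $r_k$ over the unscheduled set) checks out.
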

Each of the aforementioned observations can be easily verified and their correctness can be directly inferred from Algorithm~\ref{Alg_dual}.
\begin{obs}\label{obs:2}
For any subset $S$, we have that $d(S)^2\leq 2m\cdot f(S)$.
\end{obs}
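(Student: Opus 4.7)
The plan is to prove the observation by simply unfolding the definition of $f(S)$ and invoking non-negativity of squared terms. Recall from the notation table that $f(S) = \frac{d(S)^2 + d^2(S)}{2m}$, where $d^2(S) = \sum_{(i,j,k)\in S} d_{i,j,k}^2$. Multiplying both sides of the claimed inequality by $2m$ reduces the statement to
\begin{equation*}
d(S)^2 \;\leq\; d(S)^2 + d^2(S),
\end{equation*}
which in turn reduces to $d^2(S) \geq 0$.

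The second step is to observe that $d^2(S)$ is a sum of squares of the flow sizes $d_{i,j,k}$. Since flow sizes are non-negative (indeed, they are non-negative integers by the modelling assumption in Section~\ref{sec:Preliminaries} that flows consist of discrete data units), each term $d_{i,j,k}^2 \geq 0$, hence $d^2(S) \geq 0$.

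Combining the two steps yields $2m \cdot f(S) = d(S)^2 + d^2(S) \geq d(S)^2$, which is the desired inequality. There is no real obstacle here; the observation is a direct consequence of the definition and is presumably stated explicitly because it will be used repeatedly in subsequent bounding arguments, where $f(S)$ (the LP lower bound) must be related to the more convenient quantity $d(S)^2/(2m)$.
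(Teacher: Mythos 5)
Your proof is correct and is exactly the argument the paper intends: the observation follows immediately from the definition $f(S) = \frac{d(S)^2 + d^2(S)}{2m}$ together with $d^2(S) \geq 0$ as a sum of squares (the paper states this observation without an explicit proof precisely because it is this immediate). No gaps; nothing further is needed.
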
 

\begin{lem}\label{lem:lem1}
Let $C_{k}$ represent the completion time of coflow $k$ when scheduled according to Algorithm~\ref{Alg1}. For any coflow $k$, we have $C_{k}\leq a\cdot \max_{k'\leq k}r_k+\frac{d(S_{\mu_1(k),k})+d(S_{\mu_2(k),k})}{m}+(1-\frac{2}{m})\max_{i, j} d_{i, j, k}$, where $a=0$ signifies the absence of release times, and $a=1$ indicates the presence of arbitrary release times.
\end{lem}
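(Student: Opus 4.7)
The plan is to combine two classical ingredients: the Graham-style ``least loaded'' bound that arises from the assignment rule in line~7 of Algorithm~\ref{Alg1}, and the concurrent-open-shop bound which says that the completion time of a flow on a given core is at most the total busy time of its input port plus the total busy time of its output port on that core, minus the flow's own size, plus any initial waiting time caused by release times.

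First I would fix $k$, let $(i^{*}, j^{*}, k)$ be a flow of coflow $k$ whose completion time attains $C_{k}$, and let $h^{*}$ be the core to which Algorithm~\ref{Alg1} assigns this flow. Because on each core the online scheduler of lines 12--26 processes flows in coflow-index order and, within a coflow, in non-increasing size order, only flows scheduled no later than $(i^{*}, j^{*}, k)$ on $h^{*}$ can delay it. A standard interval argument along the lines of~\cite{shafiee2018improved} then yields
\begin{equation*}
C_{k} \;=\; C_{i^{*}, j^{*}, k} \;\le\; a \cdot \max_{k' \le k} r_{k'} \;+\; A \;+\; B \;-\; d_{i^{*}, j^{*}, k},
\end{equation*}
where $A$ is the total load on input port $i^{*}$ of core $h^{*}$ from flows processed no later than $(i^{*}, j^{*}, k)$, and $B$ is the analogous quantity on output port $j^{*}$ of core $h^{*}$. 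The release-time term enters because, after $\max_{k' \le k} r_{k'}$, every flow that could still delay $(i^{*}, j^{*}, k)$ has been released, so from that instant on the only idle moments on ports $(i^{*}, h^{*})$ and $(j^{*}, h^{*})$ are those covered by the processing time on the other port.

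Next I would invoke the greedy assignment. At the moment Algorithm~\ref{Alg1} places $(i^{*}, j^{*}, k)$, the sum $load_{I}(i^{*}, h) + load_{O}(j^{*}, h)$ is minimised at $h = h^{*}$ and therefore does not exceed its average across the $m$ cores. Summing $load_{I}(i^{*}, \cdot)$ and $load_{O}(j^{*}, \cdot)$ across all cores recovers the total load already committed on input port $i^{*}$ and on output port $j^{*}$; by the outer loop in lines~5--6 and the within-coflow size ordering, that total is at most the size of those flows of $S_{i^{*}, k} \cup S_{j^{*}, k}$ that strictly precede $(i^{*}, j^{*}, k)$ in the processing order, which in particular excludes $d_{i^{*}, j^{*}, k}$ on \emph{both} ports. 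Hence
\begin{equation*}
A + B \;\le\; \frac{d(S_{i^{*}, k}) + d(S_{j^{*}, k}) - 2\, d_{i^{*}, j^{*}, k}}{m} + 2\, d_{i^{*}, j^{*}, k},
\end{equation*}
the additive $2\, d_{i^{*}, j^{*}, k}$ restoring the own contribution of $(i^{*}, j^{*}, k)$ on input port $i^{*}$ and on output port $j^{*}$ of $h^{*}$ once it is actually placed.

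Finally I would substitute this estimate back: the coefficient of $d_{i^{*}, j^{*}, k}$ collapses to exactly $1 - 2/m$, and the bounds $d(S_{i^{*}, k}) \le d(S_{\mu_{1}(k), k})$, $d(S_{j^{*}, k}) \le d(S_{\mu_{2}(k), k})$, and $d_{i^{*}, j^{*}, k} \le \max_{i, j} d_{i, j, k}$ follow directly from the definitions of $\mu_{1}(k)$ and $\mu_{2}(k)$. The step I expect to be the main obstacle is the bookkeeping at the greedy assignment: one must count the loads ``just before'' the placement of $(i^{*}, j^{*}, k)$, which strips its own size off both $\sum_{h} load_{I}(i^{*}, h)$ and $\sum_{h} load_{O}(j^{*}, h)$ and produces the $-2\, d_{i^{*}, j^{*}, k}/m$ correction, while simultaneously accounting for the $+2\, d_{i^{*}, j^{*}, k}$ incurred by the placement on $h^{*}$ so that, when combined with the $-d_{i^{*}, j^{*}, k}$ from the open-shop inequality, exactly the claimed $(1 - 2/m)$ coefficient survives.
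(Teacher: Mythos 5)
Your proof is correct and follows essentially the same route as the paper's: bound the completion time of the last-finishing flow $(i^*,j^*,k)$ by a busy-interval argument on its assigned core, convert the relevant port loads into $\frac{1}{m}$-scaled totals over $S_{i^*,k}$ and $S_{j^*,k}$, and finish by taking maxima over ports and flow sizes. If anything, your write-up makes explicit the step the paper glosses over -- that the $\frac{1}{m}$ factor comes from the greedy core-assignment rule (the minimizing core's load is at most the average over the $m$ cores), with the $-2d_{i^*,j^*,k}/m$ and $+2d_{i^*,j^*,k}$ corrections collapsing to exactly the $(1-\frac{2}{m})$ coefficient -- which is precisely the accounting implicit in the paper's one-line justification.
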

\begin{proof}
We assume that the last completed flow in coflow $k$ is $(i, j, k)$. Consider the schedule induced by the flows $S_{i, k}\cup S_{j, k}$. Since all links $(i, j)$ in the network cores are busy from $a\cdot \max_{k'\leq k} r_{k'}$ to the start of flow $(i, j, k)$, we have:
\begin{eqnarray*}
C_{i,j,k} & \leq & a\cdot \max_{k'\leq k} r_{k'} + \frac{1}{m}d(S_{i, k}\setminus \left\{(i, j, k)\right\})+ \frac{1}{m}d(S_{j,k}\setminus \left\{(i, j, k)\right\})+d_{i,j,k} \label{eq:2}\\
              & =      & a\cdot \max_{k'\leq k} r_{k'} + \frac{1}{m}d(S_{i, k})+ \frac{1}{m}d(S_{j,k})+\left(1-\frac{2}{m}\right)d_{i,j,k} \label{eq:3}
\end{eqnarray*}
Since $C_{k}=\max_{i, j} C_{i,j,k}$, we have $C_{k}\leq a\cdot \max_{k'\leq k}r_k+\frac{d(S_{\mu_1(k),k})+d(S_{\mu_2(k),k})}{m}+(1-\frac{2}{m})\max_{i, j} d_{i, j, k}$. This proof confirms the lemma.
\end{proof}

\begin{lem}\label{lem:lem2}
For every coflow $k$, $\sum_{i \in \mathcal{I}}\sum_{j \in \mathcal{J}} \alpha_{i, j, k}+\sum_{i \in \mathcal{I}}\sum_{k'\geq k}\beta_{i,k'}L_{i,k}+\sum_{j \in \mathcal{J}}\sum_{k'\geq k}\beta_{j,k'}L_{j,k}= w_{k}$.
\end{lem}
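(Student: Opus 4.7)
The plan is to show that the dual constraint~(\ref{coflow:dual:a}) for coflow $k$ is made tight at the iteration where $k$ is scheduled by Algorithm~\ref{Alg_dual}, that it remains tight in all later iterations, and then to translate the tight constraint into the form stated in the lemma. Let $r$ be the unique iteration with $\sigma(r)=k$; under the convention $\sigma(k)=k$ we have $r=k$.

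For the first sub-step, I will inspect iteration $r$. Either the algorithm sets $\alpha_{\mu_1(r),1,k}$ (or $\alpha_{1,\mu_2(r),k}$) equal to $w_k$ minus the previously accumulated $\beta\cdot L$ mass, which immediately forces the LHS of~(\ref{coflow:dual:a}) to equal $w_k$; or it raises a single $\beta$ variable until the minimizing index coincides with $k$, which again yields equality at the end of the iteration. In both branches coflow $k$'s dual constraint becomes tight when $k$ is scheduled. Next, I will argue the constraint is never perturbed in iterations $r-1,\ldots,1$. Any later-raised $\alpha$ is indexed by a coflow $\ell\neq k$ (since $k$ was removed from $\mathcal{K}$), so it does not appear in coflow $k$'s row. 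Any later-raised $\beta$ has the form $\beta_{i,\mathcal{G}_i}$ or $\beta_{j,\mathcal{G}_j}$ evaluated at an iteration $r'<r=k$; but $\mathcal{G}_i=S_{i,r'}$ contains only flows of the first $r'<k$ coflows, so $L_{i,\mathcal{G}_i,k}=0$, and no new contribution enters coflow $k$'s constraint.

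Finally, I will translate tightness of~(\ref{coflow:dual:a}) into the lemma's form. By Observation~\ref{obs:1}, parts~\ref{obs:1-1} and~\ref{obs:1-2}, every nonzero $\beta$ can be written as $\beta_{\mu_1(k'),k'}=\beta_{\mu_1(k'),S_{\mu_1(k'),k'}}$ or $\beta_{\mu_2(k'),k'}=\beta_{\mu_2(k'),S_{\mu_2(k'),k'}}$ for some $k'$, so the sums over arbitrary subsets $S$ collapse to sums over $k'$. For each such term, $L_{\mu_1(k'),S_{\mu_1(k'),k'},k}$ equals $L_{\mu_1(k'),k}$ when $k'\geq k$ (since $S_{\mu_1(k'),k'}$ contains every flow of coflow $k$ at that input port) and equals $0$ when $k'<k$ (coflow $k$ is not among the first $k'$ coflows); the analogous identity holds on the output side. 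Discarding the vanishing terms, the surviving $\beta$ contributions become exactly $\sum_{i}\sum_{k'\geq k}\beta_{i,k'}L_{i,k}+\sum_{j}\sum_{k'\geq k}\beta_{j,k'}L_{j,k}$, which together with the $\alpha$ mass equals $w_k$.

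The main obstacle is the bookkeeping in the second sub-step: making sure that no $\beta$ raised in an iteration later than $r$ quietly contributes to coflow $k$'s row. The resolving observation is structural rather than algebraic: Algorithm~\ref{Alg_dual} always indexes a raised $\beta$ by the \emph{current} remaining flow set $\mathcal{G}_i$ (respectively $\mathcal{G}_j$), and $\mathcal{G}_i$ by construction excludes every coflow that has already been scheduled, so a coflow $k$ scheduled at iteration $r$ cannot appear in any subsequent $\mathcal{G}_i$ and is therefore permanently shielded from further updates.
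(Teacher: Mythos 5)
Your proposal is correct and follows essentially the same route as the paper's proof: the paper likewise argues that coflow $k$'s dual constraint~(\ref{coflow:dual:a}) becomes tight at the iteration where $k$ is scheduled, and that the subset sums collapse to $\sum_{k'\geq k}\beta_{i,k'}L_{i,k}+\sum_{k'\geq k}\beta_{j,k'}L_{j,k}$. Your write-up simply makes explicit two points the paper leaves implicit --- that the constraint is never perturbed in later iterations because scheduled coflows are removed from $\mathcal{K}$ and hence from all subsequent $\mathcal{G}_i$, $\mathcal{G}_j$, and that $L_{i,S_{i,k'},k}$ equals $L_{i,k}$ for $k'\geq k$ and $0$ otherwise.
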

\begin{proof}
A coflow $k$ is included in the permutation of Algorithm~\ref{Alg_dual} only if the constraint $\sum_{i \in \mathcal{I}}\sum_{j \in \mathcal{J}} \alpha_{i, j, k}+\sum_{i \in \mathcal{I}}\sum_{S\subseteq \mathcal{F}_{i}}\beta_{i,S}L_{i,S,k}+\sum_{j \in \mathcal{J}}\sum_{S\subseteq \mathcal{F}_{j}}\beta_{j,S}L_{j,S,k}\leq w_{k}$ becomes tight for this particular coflow, resulting in $\sum_{i \in \mathcal{I}}\sum_{j \in \mathcal{J}} \alpha_{i, j, k}+\sum_{i \in \mathcal{I}}\sum_{k'\geq k}\beta_{i,k'}L_{i,k}+\sum_{j \in \mathcal{J}}\sum_{k'\geq k}\beta_{j,k'}L_{j,k}= w_{k}$.
\end{proof}

\begin{lem}\label{lem:lem3}
If there is an algorithm that generates a feasible coflow schedule such that for any coflow $k$, $C_{k}\leq a\cdot \max_{k'\leq k}r_k+\frac{d(S_{\mu_1(k),k})+d(S_{\mu_2(k),k})}{m}+(1-\frac{2}{m})\max_{i, j} d_{i, j, k}$ for some constants $a$, then the total cost of the schedule is bounded as follows.
\begin{eqnarray*}
\sum_{k}w_{k}C_{k} & \leq & \left(a+\frac{2}{\kappa}\right)\sum_{k=1}^{n}\sum_{i \in \mathcal{I}}\sum_{j \in \mathcal{J}} \alpha_{i, j, k}r_{k}+2\left(a\cdot \kappa+2\right)\sum_{i \in \mathcal{I}}\sum_{S\subseteq \mathcal{F}_i}\beta_{i,S}f(S) \\
                   &      & +2\left(a\cdot \kappa+2\right)\sum_{j \in \mathcal{J}}\sum_{S\subseteq \mathcal{F}_j}\beta_{j,S}f(S)+\left(1-\frac{2}{m}\right)\cdot OPT
\end{eqnarray*}
\end{lem}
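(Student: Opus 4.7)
The plan is to split the bound on $C_k$ into three additive pieces — the release-time piece $a\cdot\max_{k'\leq k}r_{k'}$, the load piece $(d(S_{\mu_1(k),k})+d(S_{\mu_2(k),k}))/m$, and the residual piece $(1-2/m)\max_{i,j}d_{i,j,k}$ — and bound the contribution of each to $\sum_k w_k C_k$ separately. The residual piece is disposed of immediately: since every optimal schedule must finish each individual flow, $C_k^{OPT}\geq\max_{i,j}d_{i,j,k}$, hence $\sum_k w_k(1-2/m)\max_{i,j}d_{i,j,k}\leq(1-2/m)\,OPT$. For the other two pieces I would substitute $w_k$ via Lemma~\ref{lem:lem2}, obtaining a sum of $\alpha$-terms and $\beta$-terms weighted by $a\max r+d/m$, and then swap orders of summation so that the outer index runs over the nonzero dual variables.

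For the $\alpha$-contribution, I would invoke Observations~\ref{obs:1}.5--7: whenever $\alpha_{\mu_1(k),1,k}>0$ the algorithm was in the branch with $L_{\mu_1(r)}>L_{\mu_2(r)}$, so $d(S_{\mu_2(k),k})\leq d(S_{\mu_1(k),k})<mr_k/\kappa$ and $\max_{k''\leq k}r_{k''}\leq r_k$. Hence the load-piece is bounded by $2r_k/\kappa$ and the release-piece by $a\,r_k$, yielding a total contribution of $(a+2/\kappa)\sum_k\sum_{i,j}\alpha_{i,j,k}r_k$. The symmetric case $\alpha_{1,\mu_2(k),k}$ is identical.

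For the $\beta_i$-contribution, Observation~\ref{obs:1}.1 lets me write every nonzero $\beta_{i,S}$ as $\beta_{\mu_1(k'),k'}$, so after swapping sums the contribution becomes
\begin{equation*}
\sum_{k'}\beta_{\mu_1(k'),k'}\sum_{k\leq k'}L_{\mu_1(k'),k}\Bigl[a\max_{k''\leq k}r_{k''}+\frac{d(S_{\mu_1(k),k})+d(S_{\mu_2(k),k})}{m}\Bigr].
\end{equation*}
Setting $P:=d(S_{\mu_1(k'),k'})$, I would use Observation~\ref{obs:1}.3 to get $\max_{k''\leq k}r_{k''}\leq \kappa P/m$, and use monotonicity $d(S_{i,k})\leq d(S_{i,k'})$ (with the algorithm's branch inequality $d(S_{\mu_1(k'),k'})>d(S_{\mu_2(k'),k'})$, active whenever $\beta_{\mu_1(k'),k'}>0$) to conclude $d(S_{\mu_1(k),k})+d(S_{\mu_2(k),k})\leq 2P$. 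Combined with $\sum_{k\leq k'}L_{\mu_1(k'),k}=P$, the bracket collapses to $(a\kappa+2)P/m$ and the inner sum to $(a\kappa+2)P^2/m$. Observation~\ref{obs:2} gives $P^2\leq 2m\,f(S_{\mu_1(k'),k'})$, so the whole $\beta_i$-contribution is bounded by $2(a\kappa+2)\sum_i\sum_{S\subseteq\mathcal{F}_i}\beta_{i,S}f(S)$. The $\beta_j$-contribution is handled in the same way with $\mu_2$ and Observations~\ref{obs:1}.2,~\ref{obs:1}.4 in place of their $\mu_1$ counterparts. Adding the four pieces gives the claimed inequality.

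The main obstacle I anticipate is the $\beta$-step: one has to be careful that the output-side quantity $d(S_{\mu_2(k),k})$ is controlled by the input-side $P$ when only $\beta_{\mu_1(k'),k'}$ is known to be nonzero. This requires chaining monotonicity in $k$ with the strict branch condition at iteration $k'$, and observing that the choice of branch ensures $d(S_{\mu_2(k'),k'})<P$, so that both $d(S_{\mu_1(k),k})$ and $d(S_{\mu_2(k),k})$ are $\leq P$ for all $k\leq k'$. Once this is in place, the rest is bookkeeping: application of Observation~\ref{obs:2} to convert $P^2/m$ into $f(\cdot)$, and collecting the four additive bounds.
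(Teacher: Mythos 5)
Your proposal is correct and follows essentially the same route as the paper's proof: substitute $w_k$ via Lemma~\ref{lem:lem2}, bound the $\alpha$-contribution using Observation~\ref{obs:1} parts (\ref{obs:1-5})--(\ref{obs:1-7}) together with the branch condition, and bound the two $\beta$-contributions by swapping the order of summation and applying Observation~\ref{obs:1} parts (\ref{obs:1-1})--(\ref{obs:1-4}) and Observation~\ref{obs:2}, with the leftover term absorbed into $\left(1-\frac{2}{m}\right)\cdot OPT$. Your handling of the key $\beta$-step — obtaining $d(S_{\mu_1(k),k})+d(S_{\mu_2(k),k})\leq 2\,d(S_{\mu_1(k'),k'})$ by chaining monotonicity in $k$ with the strict branch inequality at iteration $k'$ — is in fact a cleaner justification than the paper's intermediate line, which asserts $d(S_{\mu_1(k),k})+d(S_{\mu_2(k),k})\leq 2\,d(S_{\mu_1(k),k})$ without support, though both chains arrive at the same valid bound.
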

\begin{proof}
By applying Lemma~\ref{lem:lem1}, we have
\begin{eqnarray*}
\sum_{k=1}^{n} w_{k}C_{k} &\leq & \sum_{k=1}^{n} w_{k}\cdot A +\left(1-\frac{2}{m}\right) \sum_{k=1}^{n} w_{k} \max_{i, j} d_{i, j, k}
\end{eqnarray*}
where $A=a\cdot \max_{k'\leq k}r_k+\frac{d(S_{\mu_1(k),k})+d(S_{\mu_2(k),k})}{m}$. Since the completion time of any coflow is at least the time required to transmit its maximum flow, we have $ \sum_{k=1}^{n} w_{k} \max_{i, j} d_{i, j, k}\leq OPT$. Now we focus on the first term $\sum_{k=1}^{n} w_{k}\cdot A$. By applying Lemma~\ref{lem:lem2}, we have
\begin{eqnarray*}
\sum_{k=1}^{n} w_{k}\cdot A & = & \sum_{k=1}^{n} \sum_{i \in \mathcal{I}}\sum_{j \in \mathcal{J}} \alpha_{i, j, k}\cdot A +\sum_{k=1}^{n}\sum_{i \in \mathcal{I}}\sum_{k'\geq k}\beta_{i,k'}L_{i,k}\cdot A +\sum_{k=1}^{n}\sum_{j \in \mathcal{J}}\sum_{k'\geq k}\beta_{j,k'}L_{j,k} \cdot A
\end{eqnarray*}
Let's begin by bounding $\sum_{k=1}^{n} \sum_{i \in \mathcal{I}}\sum_{j \in \mathcal{J}} \alpha_{i, j, k}\cdot A$.
By applying Observation~\ref{obs:1} parts (\ref{obs:1-5}), (\ref{obs:1-6}) and (\ref{obs:1-7}), we have
\begin{flalign*}
      & \sum_{k=1}^{n}\sum_{i \in \mathcal{I}}\sum_{j \in \mathcal{J}}\alpha_{i, j, k}\cdot A \\
\leq  & \sum_{k=1}^{n}\sum_{i \in \mathcal{I}}\sum_{j \in \mathcal{J}} \alpha_{i, j, k}\left(a\cdot r_{k}+2\cdot \frac{r_{k}}{\kappa}\right) \\
\leq  & \left(a+\frac{2}{\kappa}\right)\sum_{k=1}^{n}\sum_{i \in \mathcal{I}}\sum_{j \in \mathcal{J}} \alpha_{i, j, k}\cdot r_{k}
\end{flalign*}
Now we bound $\sum_{k=1}^{n}\sum_{i \in \mathcal{I}}\sum_{k'\geq k}\beta_{i,k'}L_{i,k}\cdot A$. By applying Observation~\ref{obs:1} part (\ref{obs:1-3}), we have
\begin{flalign*}
      & \sum_{k=1}^{n}\sum_{i \in \mathcal{I}}\sum_{k'\geq k}\beta_{i,k'}L_{i,k}\cdot A\\
\leq  & \sum_{k=1}^{n}\sum_{i \in \mathcal{I}}\sum_{k'\geq k}\beta_{i,k'}L_{i,k}\left\{a\cdot \max_{\ell\leq k}r_{\ell}+\frac{d(S_{\mu_1(k),k})+d(S_{\mu_2(k),k})}{m}\right\} \\
\leq  & \sum_{k=1}^{n}\sum_{i \in \mathcal{I}}\sum_{k'\geq k}\beta_{i,k'}L_{i,k}\left\{a\cdot \kappa \cdot \frac{d(S_{\mu_1(k'),k'})}{m} + 2\cdot \frac{d(S_{\mu_1(k),k})}{m}\right\} \\ 
\leq  & \left(a\cdot \kappa+2\right)\sum_{k'=1}^{n}\sum_{i \in \mathcal{I}}\sum_{k\leq k'}\beta_{i,k'}L_{i,k}\frac{d(S_{\mu_1(k'),k'})}{m} \\  
\leq  & \left(a\cdot \kappa+2\right)\sum_{k'=1}^{n}\sum_{i \in \mathcal{I}}\beta_{i,k'}\sum_{k\leq k'}L_{i,k}\frac{d(S_{\mu_1(k'),k'})}{m} \\
=     & \left(a\cdot \kappa+2\right)\sum_{k'=1}^{n}\sum_{i \in \mathcal{I}}\beta_{i,k'}d(S_{i,k'})\frac{d(S_{\mu_1(k'),k'})}{m} \\
\leq  & \left(a\cdot \kappa+2\right)\sum_{k'=1}^{n}\sum_{i \in \mathcal{I}}\beta_{i,k'}\frac{\left(d(S_{\mu_1(k'),k'})\right)^2}{m}
\end{flalign*}
By sequentially applying Observation~\ref{obs:2} and Observation~\ref{obs:1} part (\ref{obs:1-1}), we can upper bound this expression by
\begin{flalign*}
   & 2\left(a\cdot \kappa+2\right)\sum_{i \in \mathcal{I}}\sum_{k=1}^{n}\beta_{i,k}f(S_{\mu_1(k),k}) \\
=  & 2\left(a\cdot \kappa+2\right)\sum_{k=1}^{n}\beta_{\mu_1(k),k}f(S_{\mu_1(k),k}) \\
\leq  & 2\left(a\cdot \kappa+2\right)\sum_{i \in \mathcal{I}}\sum_{S\subseteq \mathcal{F}_i}\beta_{i,S}f(S)
\end{flalign*}
By Observation~\ref{obs:2} and Observation~\ref{obs:1} parts (\ref{obs:1-2}) and (\ref{obs:1-4}), we also can obtain 
\begin{flalign*}
\sum_{k=1}^{n}\sum_{j \in \mathcal{J}}\sum_{k'\geq k}\beta_{j,k'}L_{j,k} \cdot A \leq 2\left(a\cdot \kappa+2\right)\sum_{j \in \mathcal{J}}\sum_{S\subseteq \mathcal{F}_j}\beta_{j,S}f(S)
\end{flalign*}
Therefore,
\begin{eqnarray*}
\sum_{k}w_{k}C_{k} & \leq & \left(a+\frac{2}{\kappa}\right)\sum_{k=1}^{n}\sum_{i \in \mathcal{I}}\sum_{j \in \mathcal{J}} \alpha_{i, j, k}r_{k}+2\left(a\cdot \kappa+2\right)\sum_{i \in \mathcal{I}}\sum_{S\subseteq \mathcal{F}_i}\beta_{i,S}f(S) \\
									 &      & +2\left(a\cdot \kappa+2\right)\sum_{j \in \mathcal{J}}\sum_{S\subseteq \mathcal{F}_j}\beta_{j,S}f(S)+\left(1-\frac{2}{m}\right)\cdot OPT
\end{eqnarray*}
\end{proof}

\begin{thm}\label{thm:thm1}
There exists a deterministic, combinatorial, polynomial time algorithm that achieves an approximation ratio of $6-\frac{2}{m}$ for the flow-level scheduling problem with release times.
\end{thm}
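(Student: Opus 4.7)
The plan is to feed Lemma~\ref{lem:lem1}---which is realized by Algorithm~\ref{Alg1}---into Lemma~\ref{lem:lem3} with $a=1$, and then discharge the resulting primal-dual sums against $OPT$ via weak LP duality, choosing the free parameter $\kappa$ to balance the emerging coefficients.

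Setting $a=1$ in Lemma~\ref{lem:lem3} yields
\begin{eqnarray*}
\sum_{k} w_k C_k & \leq & \left(1+\frac{2}{\kappa}\right)\sum_{k}\sum_{i \in \mathcal{I}}\sum_{j \in \mathcal{J}}\alpha_{i,j,k}r_k + 2(\kappa+2)\sum_{i \in \mathcal{I}}\sum_{S\subseteq \mathcal{F}_i}\beta_{i,S}f(S) \\
                 &      & + 2(\kappa+2)\sum_{j \in \mathcal{J}}\sum_{S\subseteq \mathcal{F}_j}\beta_{j,S}f(S) + \left(1-\frac{2}{m}\right) OPT.
\end{eqnarray*}
By weak LP duality the dual objective (\ref{coflow:dual}) is at most $OPT$; since all dual variables are non-negative, discarding the $\alpha_{i,j,k}d_{i,j,k}$ contribution gives
\begin{eqnarray*}
\sum_{k}\sum_{i \in \mathcal{I}}\sum_{j \in \mathcal{J}}\alpha_{i,j,k}r_k + \sum_{i \in \mathcal{I}}\sum_{S\subseteq \mathcal{F}_i}\beta_{i,S}f(S) + \sum_{j \in \mathcal{J}}\sum_{S\subseteq \mathcal{F}_j}\beta_{j,S}f(S) \leq OPT,
\end{eqnarray*}
so the first three terms in the previous display are jointly bounded by $\max\{1+2/\kappa,\ 2(\kappa+2)\}\cdot OPT$.

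The remaining step is to pick $\kappa$ optimally. Equating $1+2/\kappa = 2(\kappa+2)$ produces the quadratic $2\kappa^2+3\kappa-2=0$, whose positive root is $\kappa=1/2$; at that value both coefficients equal $5$, so the total cost is bounded by $(5+1-2/m)\cdot OPT = (6-2/m)\cdot OPT$. The procedure remains deterministic, combinatorial, and polynomial: Algorithm~\ref{Alg_dual} never actually solves the LP, its equivalent version (Algorithm~\ref{Alg_dual1-1}) runs in $O(n^2)$ time, and Algorithm~\ref{Alg1} is a trivial list schedule.

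The only non-routine point is the quadratic balancing that pins down $\kappa=1/2$; dual feasibility of the constructed $(\alpha,\beta)$ is exactly the tightness assertion in Lemma~\ref{lem:lem2}, so weak duality legitimately applies, and all other steps are bookkeeping supplied by Observation~\ref{obs:1}, Observation~\ref{obs:2}, and Lemma~\ref{lem:lem3}.
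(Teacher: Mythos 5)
Your proposal is correct and follows essentially the same route as the paper: invoke Lemma~\ref{lem:lem3} with $a=1$, take $\kappa=\tfrac{1}{2}$, and absorb the $\alpha$- and $\beta$-sums into $OPT$ via weak duality of (\ref{coflow:dual}). You merely make explicit two steps the paper leaves implicit---deriving $\kappa=\tfrac{1}{2}$ by balancing $1+2/\kappa=2(\kappa+2)$, and discarding the nonnegative $\alpha_{i,j,k}d_{i,j,k}$ term before applying the dual bound---so the argument matches the paper's proof in substance.
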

\begin{proof}
To schedule coflows without release times, the application of Lemma~\ref{lem:lem3} (with $a = 1$) indicates the following:
\begin{eqnarray*}
\sum_{k}w_{k}C_{k} & \leq & \left(1+\frac{2}{\kappa}\right)\sum_{k=1}^{n}\sum_{i \in \mathcal{I}}\sum_{j \in \mathcal{J}} \alpha_{i, j, k}r_{k}+2\left(\kappa+2\right)\sum_{i \in \mathcal{I}}\sum_{S\subseteq \mathcal{F}_i}\beta_{i,S}f(S) \\
									 &      & +2\left(\kappa+2\right)\sum_{j \in \mathcal{J}}\sum_{S\subseteq \mathcal{F}_j}\beta_{j,S}f(S)+\left(1-\frac{2}{m}\right)\cdot OPT.
\end{eqnarray*}									
In order to minimize the approximation ratio, we can substitute $\kappa=\frac{1}{2}$ and obtain the following result:
\begin{eqnarray*}
\sum_{k}w_{k}C_{k} & \leq & 5\sum_{k=1}^{n}\sum_{i \in \mathcal{I}}\sum_{j \in \mathcal{J}} \alpha_{i, j, k}r_{k} +5\sum_{i \in \mathcal{I}}\sum_{S\subseteq \mathcal{F}_i}\beta_{i,S}f(S) \\
									 &      & +5\sum_{j \in \mathcal{J}}\sum_{S\subseteq \mathcal{F}_j}\beta_{j,S}f(S) +\left(1-\frac{2}{m}\right)\cdot OPT \\
									 & \leq & \left(6-\frac{2}{m}\right)\cdot OPT.
\end{eqnarray*}									
\end{proof}

\begin{thm}\label{thm:thm2}
There exists a deterministic, combinatorial, polynomial time algorithm that achieves an approximation ratio of $5-\frac{2}{m}$ for the flow-level scheduling problem without release times.
\end{thm}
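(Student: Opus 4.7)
The plan is to specialize Lemma~\ref{lem:lem3} to the no-release-times regime by setting $a = 0$. Since Lemma~\ref{lem:lem1} already provides the per-coflow completion-time bound required by the hypothesis of Lemma~\ref{lem:lem3}, and since the Flow-Driven-List-Scheduling algorithm (Algorithm~\ref{Alg1}) together with the primal-dual ordering routine (Algorithm~\ref{Alg_dual}) is deterministic, combinatorial, and polynomial-time, all that remains is to plug $a = 0$ into the bound and simplify.

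With $a = 0$, the conclusion of Lemma~\ref{lem:lem3} becomes
$$\sum_k w_k C_k \leq \frac{2}{\kappa} \sum_{k=1}^{n}\sum_{i \in \mathcal{I}}\sum_{j \in \mathcal{J}} \alpha_{i,j,k} r_k + 4\sum_{i \in \mathcal{I}}\sum_{S \subseteq \mathcal{F}_i} \beta_{i,S} f(S) + 4\sum_{j \in \mathcal{J}}\sum_{S \subseteq \mathcal{F}_j} \beta_{j,S} f(S) + \left(1 - \frac{2}{m}\right) OPT.$$
When there are no release times, $r_k = 0$ for every $k \in \mathcal{K}$, so the entire $\alpha$-contribution vanishes identically, \emph{regardless of the value of $\kappa$}. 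This is the key simplification relative to Theorem~\ref{thm:thm1}: there is no longer any trade-off between the $\alpha$ and $\beta$ coefficients to optimize, and the coefficient multiplying each $\beta$-sum is simply $2(a\kappa + 2) = 4$.

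To close the argument, I would invoke weak LP duality applied to the pair (\ref{coflow:main})/(\ref{coflow:dual}). Since the objective value of any feasible dual solution lower-bounds $OPT$, and since $r_k = 0$ kills the $\alpha_{i,j,k} r_k$ contribution to the dual objective while $\alpha_{i,j,k} d_{i,j,k} \geq 0$, we obtain
$$\sum_{i \in \mathcal{I}}\sum_{S \subseteq \mathcal{F}_i} \beta_{i,S} f(S) + \sum_{j \in \mathcal{J}}\sum_{S \subseteq \mathcal{F}_j} \beta_{j,S} f(S) \leq OPT.$$
Combining this with the previous display yields $\sum_k w_k C_k \leq 4 \cdot OPT + \left(1 - \frac{2}{m}\right) OPT = \left(5 - \frac{2}{m}\right) OPT$, as claimed. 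There is no genuine new obstacle beyond what was already handled in Lemma~\ref{lem:lem3}; the theorem is essentially a direct corollary in which the absence of release times eliminates the term that forced the $\kappa = 1/2$ trade-off in the proof of Theorem~\ref{thm:thm1}, improving the additive $1$ in that ratio down to $0$.
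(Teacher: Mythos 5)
Your proposal is correct and follows essentially the same route as the paper: apply Lemma~\ref{lem:lem3} with $a=0$, note the $\beta$-coefficients become $4$, and bound the resulting dual expression by $OPT$ via weak duality to get $\left(5-\frac{2}{m}\right)\cdot OPT$. The only difference is cosmetic: the paper formally substitutes $\kappa=\frac{1}{2}$ and keeps the (vanishing) $\alpha$-term with coefficient $4$, whereas you observe directly that $r_k=0$ annihilates the $\alpha$-contribution for any $\kappa$ and make the weak-duality step explicit—both refinements of presentation, not of substance.
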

\begin{proof}
To schedule coflows without release times, the application of Lemma~\ref{lem:lem3} (with $a = 0$) indicates the following:
\begin{eqnarray*}
\sum_{k}w_{k}C_{k} & \leq & \left(\frac{2}{\kappa}\right)\sum_{k=1}^{n}\sum_{i \in \mathcal{I}}\sum_{j \in \mathcal{J}} \alpha_{i, j, k}r_{k} +2\cdot 2\sum_{i \in \mathcal{I}}\sum_{S\subseteq \mathcal{F}_i}\beta_{i,S}f(S) \\
									 &      & +2\cdot 2\sum_{j \in \mathcal{J}}\sum_{S\subseteq \mathcal{F}_j}\beta_{j,S}f(S)+\left(1-\frac{2}{m}\right)\cdot OPT.
\end{eqnarray*}									
In order to minimize the approximation ratio, we can substitute $\kappa=\frac{1}{2}$ and obtain the following result:
\begin{eqnarray*}
\sum_{k}w_{k}C_{k} & \leq & 4\sum_{k=1}^{n}\sum_{i \in \mathcal{I}}\sum_{j \in \mathcal{J}} \alpha_{i, j, k}r_{k} +4\sum_{i \in \mathcal{I}}\sum_{S\subseteq \mathcal{F}_i}\beta_{i,S}f(S) \\
									 &      & +4\sum_{j \in \mathcal{J}}\sum_{S\subseteq \mathcal{F}_j}\beta_{j,S}f(S)+\left(1-\frac{2}{m}\right)\cdot OPT \\
									 & \leq & \left(5-\frac{2}{m}\right)\cdot OPT.
\end{eqnarray*}									
\end{proof}

\section{Approximation Algorithm for the Coflow-level Scheduling Problem}\label{sec:Algorithm2}
This section specifically addresses the coflow-level scheduling problem, which involves the transmission of flows within a coflow through a single core. It is worth recalling that $L_{i,k}=\sum_{j=1}^{N}d_{i,j,k}$ and $L_{j,k}=\sum_{i=1}^{N}d_{i,j,k}$, where $L_{i,k}$ represents the total load at source $i$ for coflow $k$, and $L_{j,k}$ represents the total load at destination $j$ for coflow $k$.
Let
\begin{eqnarray*}
f_{i}(S) = \frac{\sum_{k\in S} L_{i,k}^2+\left(\sum_{k\in S} L_{i,k}\right)^2}{2m}
\end{eqnarray*}
and
\begin{eqnarray*}
f_{j}(S) = \frac{\sum_{k\in S} L_{j,k}^2+\left(\sum_{k\in S} L_{j,k}\right)^2}{2m}
\end{eqnarray*}
for any subset $S\subseteq \mathcal{K}$.
To address this problem, we propose a linear programming relaxation formulation as follows:
\begin{subequations}\label{incoflow:main}
\begin{align}
& \text{min}  && \sum_{k \in \mathcal{K}} w_{k} C_{k}     &   & \tag{\ref{incoflow:main}} \\
& \text{s.t.} && C_{k}\geq r_k+L_{i,k}, && \forall k\in \mathcal{K}, \forall i\in \mathcal{I} \label{incoflow:a} \\
&             && C_{k}\geq r_k+L_{j,k}, && \forall k\in \mathcal{K}, \forall j\in \mathcal{J} \label{incoflow:b} \\
&             && \sum_{k\in S}L_{i,k}C_{k}\geq f_{i}(S)&&  \forall i\in \mathcal{I}, \forall S\subseteq \mathcal{K} \label{incoflow:c} \\
&             && \sum_{k\in S}L_{j,k}C_{k}\geq f_{j}(S)&& \forall j\in \mathcal{J}, \forall S\subseteq \mathcal{K} \label{incoflow:d}  
\end{align}
\end{subequations}

In the linear program (\ref{incoflow:main}), the completion time $C_{k}$ is defined for each coflow $k$ in the schedule. Constraints (\ref{incoflow:a}) and (\ref{incoflow:b}) are to ensure that the completion time of any coflow $k$ is greater than or equal to its release time $r_k$ plus its load. Furthermore, constraints (\ref{incoflow:c}) and (\ref{incoflow:d}) establish lower bounds for the completion time variable at the input port and the output port, respectively.

The dual linear program is given by
\begin{subequations}\label{incoflow:dual}
\begin{align}
& \text{max}  && \sum_{k \in \mathcal{K}}\sum_{i \in \mathcal{I}} \alpha_{i, k}(r_k+L_{i,k})+\sum_{k \in \mathcal{K}}\sum_{j \in \mathcal{J}} \alpha_{j, k}(r_k+L_{j,k})   +\sum_{i \in \mathcal{I}}\sum_{S \subseteq \mathcal{K}}\beta_{i,S} f_{i}(S) \notag\\
&   && +\sum_{j \in \mathcal{J}}\sum_{S \subseteq \mathcal{K}}\beta_{j,S} f_{j}(S)   &   & \tag{\ref{incoflow:dual}} \\
& \text{s.t.} && \sum_{i \in \mathcal{I}} \alpha_{i, k}+\sum_{j \in \mathcal{J}} \alpha_{j, k}+\sum_{i \in \mathcal{I}}\sum_{S\subseteq \mathcal{K}/k\in S}\beta_{i,S}L_{i,k}  +\sum_{j \in \mathcal{J}}\sum_{S\subseteq \mathcal{K}/k\in S}\beta_{j,S}L_{j,k}\leq w_{k}, && \forall k\in \mathcal{K} \label{incoflow:dual:a} \\
&  && \alpha_{i, k} \geq 0, && \forall k\in \mathcal{K}, \forall i\in \mathcal{I} \label{incoflow:dual:b} \\
&  && \alpha_{j, k} \geq 0, && \forall k\in \mathcal{K}, \forall j\in \mathcal{J} \label{incoflow:dual:b2} \\
&  && \beta_{i, S}\geq 0,   &&  \forall i\in \mathcal{I}, \forall S\subseteq \mathcal{K} \label{incoflow:dual:c} \\
&  && \beta_{j, S}\geq 0,   &&  \forall j\in \mathcal{J}, \forall S\subseteq \mathcal{K} \label{incoflow:dual:d} 
\end{align}
\end{subequations}

Notice that for every coflow $k$, there exists two dual variables $\alpha_{i, k}$ and $\alpha_{j, k}$, and there is a corresponding constraint. Additionally, for every subset of coflows $S$, there are two dual variables $\beta_{i, S}$ and $\beta_{j, S}$. 
Algorithm~\ref{Alg2_dual} presents the primal-dual algorithm. The concept is the same as Algorithm~\ref{Alg_dual}. The difference is that scheduling is done from the perspective of coflows. Appendix~\ref{appendix:b} presents a simple and equivalent algorithm, which is Algorithm~\ref{Alg2_dual2-1}. 
This algorithm has a space complexity of $O(Nn)$ and a time complexity of $O(n^2)$, where $N$ represents the number of input/output ports and $n$ represents the number of coflows.

\begin{algorithm*}
\caption{Permuting Coflows}
    \begin{algorithmic}[1]
		    \STATE $\mathcal{K}$ is the set of unscheduled coflows and initially $K=\left\{1,2,\ldots,n\right\}$
				\STATE $\alpha_{i, k}=0$ for all $k\in \mathcal{K}, i\in \mathcal{I}$
				\STATE $\alpha_{j, k}=0$ for all $k\in \mathcal{K}, j\in \mathcal{J}$
				\STATE $\beta_{i, S}= 0$ for all $i\in \mathcal{I}, S\subseteq \mathcal{K}$
				\STATE $\beta_{j, S}= 0$ for all $j\in \mathcal{J}, S\subseteq \mathcal{K}$
				\STATE $L_{i,k}=\sum_{j\in \mathcal{J}} d_{i,j,k}$ for all $k\in \mathcal{K}, i\in \mathcal{I}$
				\STATE $L_{j,k}=\sum_{i\in \mathcal{I}} d_{i,j,k}$ for all $k\in \mathcal{K}, j\in \mathcal{J}$
				\STATE $L_{i} = \sum_{k\in \mathcal{K}}L_{i,k}$ for all $i\in \mathcal{I}$
				\STATE $L_{j} = \sum_{k\in \mathcal{K}}L_{j,k}$ for all $j\in \mathcal{J}$		
				\FOR{$r=n, n-1, \ldots, 1$}
				    \STATE $\mu_1(r)=\arg\max_{i\in \mathcal{I}} L_{i}$
				    \STATE $\mu_2(r)=\arg\max_{j\in \mathcal{J}} L_{j}$
						\STATE $k=\arg\max_{\ell\in \mathcal{K}} r_{\ell}$
						\IF{$L_{\mu_1(r)}>L_{\mu_2(r)}$}
                \IF{$r_{k}>\frac{\kappa\cdot L_{\mu_1(r)}}{m}$}
						        \STATE $\alpha_{\mu_1(r), k}=w_{k}-\sum_{i \in \mathcal{I}}\sum_{S\ni k}\beta_{i,S}L_{i,k}-\sum_{j \in \mathcal{J}}\sum_{S\ni k}\beta_{j,S}L_{j,k}$
										\STATE $\sigma(r)\leftarrow k$
						    \ELSIF{$r_{k}\leq\frac{\kappa\cdot L_{\mu_1(r)}}{m}$}
						        \STATE $k'=\arg\min_{k\in \mathcal{K}}\left\{\frac{w_{k}-\sum_{i \in \mathcal{I}}\sum_{S\ni k}\beta_{i,S}L_{i,k}-\sum_{j \in \mathcal{J}}\sum_{S\ni k}\beta_{j,S}L_{j,k}}{L_{\mu_1(r),k}}\right\}$
										\STATE $\beta_{\mu_1(r),\mathcal{K}}=\frac{w_{k'}-\sum_{i \in \mathcal{I}}\sum_{S\ni k}\beta_{i,S}L_{i,k'}-\sum_{j \in \mathcal{J}}\sum_{S\ni k}\beta_{j,S}L_{j,k'}}{L_{\mu_1(r),k'}}$
										\STATE $\sigma(r)\leftarrow k'$
						    \ENDIF						
						\ELSE
                \IF{$r_{k}>\frac{\kappa\cdot L_{\mu_2(r)}}{m}$}
						        \STATE $\alpha_{\mu_2(r), k}=w_{k}-\sum_{i \in \mathcal{I}}\sum_{S\ni k}\beta_{i,S}L_{i,k}-\sum_{j \in \mathcal{J}}\sum_{S\ni k}\beta_{j,S}L_{j,k}$
										\STATE $\sigma(r)\leftarrow k$
						    \ELSIF{$r_{k}\leq\frac{\kappa\cdot L_{\mu_2(r)}}{m}$}
						        \STATE $k'=\arg\min_{k\in \mathcal{K}}\left\{\frac{w_{k}-\sum_{i \in \mathcal{I}}\sum_{S\ni k}\beta_{i,S}L_{i,k}-\sum_{j \in \mathcal{J}}\sum_{S\ni k}\beta_{j,S}L_{j,k}}{L_{\mu_2(r),k}}\right\}$
										\STATE $\beta_{\mu_2(r),\mathcal{K}}=\frac{w_{k'}-\sum_{i \in \mathcal{I}}\sum_{S\ni k}\beta_{i,S}L_{i,k'}-\sum_{j \in \mathcal{J}}\sum_{S\ni k}\beta_{j,S}L_{j,k'}}{L_{\mu_2(r),k'}}$
										\STATE $\sigma(r)\leftarrow k'$
						    \ENDIF						
						\ENDIF
						\STATE $\mathcal{K}\leftarrow \mathcal{K}\setminus \sigma(r)$
    				\STATE $L_{i} = L_{i}-L_{i,\sigma(r)}$ for all $i\in \mathcal{I}$
		    		\STATE $L_{j} = L_{j}-L_{j,\sigma(r)}$ for all $j\in \mathcal{J}$				
				\ENDFOR
   \end{algorithmic}
\label{Alg2_dual}
\end{algorithm*}

The coflow-driven-list-scheduling (as described in Algorithm~\ref{Alg2}) operates as follows. We assume, without loss of generality, that the coflows are ordered based on the permutation provided by Algorithm~\ref{Alg2_dual}, where $\sigma(k)=k, \forall k\in \mathcal{K}$ and schedule all the flows in each coflow iteratively, respecting the order in this list. For each coflow $k$, we determine a network core $h^*$ that can transmit coflow $k$ in a way that minimizes the complete time of coflow $k$ (lines \ref{alg2-2}-\ref{alg2-3}). We then transmit all the flows allocated to network core $h$ in order to minimize its completion time.

\begin{algorithm}
\caption{coflow-driven-list-scheduling}
    \begin{algorithmic}[1]
		    \STATE Let $load_{I}(i,h)$ be the load on the $i$-th input port of the network core $h$
		    \STATE Let $load_{O}(j,h)$ be the load on the $j$-th output port of the network core $h$
		    \STATE Let $\mathcal{A}_h$ be the set of coflows allocated to network core $h$				
				\STATE Both $load_{I}$ and $load_{O}$ are initialized to zero and $\mathcal{A}_h=\emptyset$ for all $h\in [1, m]$
				\FOR{$k=1, 2, \ldots, n$} \label{alg2-2}
				    \STATE $h^*=\arg \min_{h\in [1, m]}\left(\max_{i,j\in [1,N]}load_{I}(i,h)+\right.$ $\left.load_{O}(j,h)+L_{i,k}+L_{j,k}\right)$
						\STATE $\mathcal{A}_{h^*}=\mathcal{A}_{h^*}\cup \left\{k\right\}$
						\STATE $load_{I}(i,h^*)=load_{I}(i,h^*)+L_{i,k}$ and $load_{O}(j,h^*)=load_{O}(j,h^*)+L_{j,k}$ for all $i,j\in [1,N]$
				\ENDFOR               \label{alg2-3}
				\FOR{each $h\in [1, m]$ do in parallel}
				    \STATE wait until the first coflow is released
						\WHILE{there is some incomplete flow}
						    \STATE for all $k\in \mathcal{A}_{h}$, list the released and incomplete flows respecting the increasing order in $k$
								\STATE let $L$ be the set of flows in the list
                \FOR{every flow $f=(i, j, k)\in L$}
										\IF{the link $(i, j)$ is idle}
										    \STATE schedule flow $f$ \label{alg2-1}
										\ENDIF
								\ENDFOR
								\WHILE{no new flow is completed or released}
								    \STATE transmit the flows that get scheduled in line \ref{alg2-1} at maximum rate 1.
								\ENDWHILE
						\ENDWHILE
				\ENDFOR
   \end{algorithmic}
\label{Alg2}
\end{algorithm}

\subsection{Analysis}
This section substantiates the effectiveness of the proposed algorithm by providing proof of its approximation ratios. Specifically, we demonstrate that the algorithm achieves an approximation ratio of $4m+1$ for arbitrary release times and an approximation ratio of $4m$ in the absence of release times. It is important to note that we assume the coflows are arranged in the order determined by the permutation generated by Algorithm~\ref{Alg2_dual}, i.e., $\sigma(k)=k, \forall k\in \mathcal{K}$.
We also recall that $S_{k}=\left\{1, 2, \ldots, k\right\}$ denote the set of first $k$ coflows. Let $\beta_{i,k}=\beta_{i,S_{k}}$ and $\beta_{j,k}=\beta_{j,S_{k}}$. Let $L_{i}(S_{k})=\sum_{k'\leq k} L_{i, k'}$ and $L_{j}(S_{k})=\sum_{k'\leq k} L_{j, k'}$.
Additionally, let $\mu_1(k)$ denote the input port with the highest load in $S_{k}$, and $\mu_2(k)$ denote the output port with the highest load in $S_{k}$.
Therefore, $L_{\mu_1(k)}(S_{k})=\sum_{k'\leq k} L_{\mu_1(k), k'}$ and $L_{\mu_2(k)}(S_{k})=\sum_{k'\leq k} L_{\mu_2(k), k'}$.

Let us begin by presenting several key observations regarding the primal-dual algorithm.
\begin{obs}\label{obs:3}
The following statements hold.

\begin{enumerate}
\item Every nonzero $\beta_{i,S}$ can be written as $\beta_{\mu_1(k),k}$ for some coflow $k$. \label{obs:3-1}
\item Every nonzero $\beta_{j,S}$ can be written as $\beta_{\mu_2(k),k}$ for some coflow $k$. \label{obs:3-2}
\item For every set $S_{k}$ that has a nonzero $\beta_{\mu_1(k),k}$ variable, if $k' \leq k$ then $r_{k'}\leq \frac{\kappa\cdot L_{\mu_1(k)}(S_{k})}{m}$. \label{obs:3-3}
\item For every set $S_{k}$ that has a nonzero $\beta_{\mu_2(k),k}$ variable, if $k' \leq k$ then $r_{k'}\leq \frac{\kappa\cdot L_{\mu_2(k)}(S_{k})}{m}$. \label{obs:3-4}
\item For every coflow $k$ that has a nonzero $\alpha_{\mu_1(k), k}$, $r_{k}>\frac{\kappa\cdot L_{\mu_1(k)}(S_{k})}{m}$. \label{obs:3-5}
\item For every coflow $k$ that has a nonzero $\alpha_{\mu_2(k), k}$, $r_{k}>\frac{\kappa\cdot L_{\mu_2(k)}(S_{k})}{m}$. \label{obs:3-6}
\item For every coflow $k$ that has a nonzero $\alpha_{\mu_1(k), k}$ or a nonzero $\alpha_{\mu_2(k), k}$, if $k'\leq k$ then $r_{k'}\leq r_{k}$. \label{obs:3-7}
\end{enumerate}
\end{obs}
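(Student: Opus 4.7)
The plan is to verify each of the seven statements by inspecting Algorithm~\ref{Alg2_dual} directly, observing that under the renumbering convention $\sigma(k)=k$ assumed in the analysis, the set $\mathcal{K}$ of unscheduled coflows at the start of iteration $r$ is exactly $S_r=\{1,2,\ldots,r\}$. This identification is the linchpin: since $\mathcal{K}$ shrinks by exactly one element per iteration as $r$ counts down from $n$, the coflow that becomes $\sigma(r)$ in that iteration is the $r$-th coflow of the final permutation, and all still-unscheduled coflows occupy positions $1,\ldots,r-1$. Once this correspondence is fixed, each item of the observation is a direct reading of the algorithm.

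For parts (\ref{obs:3-1}) and (\ref{obs:3-2}), I would note that the only lines that make a $\beta$ variable nonzero are the two $\text{ELSIF}$ branches, which set $\beta_{\mu_1(r),\mathcal{K}}$ or $\beta_{\mu_2(r),\mathcal{K}}$ with $\mathcal{K}=S_r$; thus any nonzero $\beta_{i,S}$ has the form $\beta_{\mu_1(r),S_r}=\beta_{\mu_1(r),r}$ (and symmetrically for $\beta_{j,S}$), which is exactly the claimed form. For parts (\ref{obs:3-3}) and (\ref{obs:3-4}), whenever the $\text{ELSIF}$ branch fires, the guard $r_k\le \frac{\kappa\cdot L_{\mu_1(r)}}{m}$ holds for the coflow $k=\arg\max_{\ell\in\mathcal{K}}r_\ell$; since the maximum of $r_{k'}$ over $k'\in \mathcal{K}=S_r$ dominates every $r_{k'}$ with $k'\le r$, and $L_{\mu_1(r)}$ at this moment equals $L_{\mu_1(r)}(S_r)$, the bound $r_{k'}\le \frac{\kappa\cdot L_{\mu_1(r)}(S_r)}{m}$ follows immediately, with the analogous argument for the output-side case.

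For parts (\ref{obs:3-5}) and (\ref{obs:3-6}), the $\alpha$ variables are made nonzero only inside the $\text{IF}$ branches, which are precisely guarded by $r_k>\frac{\kappa\cdot L_{\mu_1(r)}}{m}$ (respectively $r_k>\frac{\kappa\cdot L_{\mu_2(r)}}{m}$); since the coflow receiving this $\alpha$ is exactly $k=\sigma(r)$, the inequality reads $r_k>\frac{\kappa\cdot L_{\mu_1(k)}(S_k)}{m}$ as claimed. Finally, for part (\ref{obs:3-7}), whenever an $\alpha_{\mu_1(r),k}$ or $\alpha_{\mu_2(r),k}$ becomes nonzero, $k$ is again selected as the argmax of $r_\ell$ over $\ell\in\mathcal{K}=S_r$, so $r_{k'}\le r_k$ holds for every $k'\le r=k$. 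I do not anticipate any substantial obstacle; the only place requiring a moment's care is confirming the bookkeeping that $\mathcal{K}=S_r$ at the start of iteration $r$, and that the quantities $L_{\mu_1(r)}$ and $L_{\mu_2(r)}$ used by the guards coincide with $L_{\mu_1(k)}(S_k)$ and $L_{\mu_2(k)}(S_k)$ under the renumbering, which is immediate from the updates at the tail of the loop.
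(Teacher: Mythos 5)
Your proof is correct and takes essentially the same approach as the paper, whose entire argument is the single assertion that all seven statements "can be directly inferred from Algorithm~\ref{Alg2_dual}." Your write-up merely supplies the details the paper leaves implicit: the identification $\mathcal{K}=S_r$ at the start of iteration $r$ under the renumbering $\sigma(k)=k$, the fact that the maintained loads satisfy $L_{i}=L_{i}(S_r)$ at that point, and the direct reading of the branch guards and the argmax selection of the maximum release time.
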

Each of the aforementioned observations can be easily verified and their correctness can be directly inferred from Algorithm~\ref{Alg2_dual}.
\begin{obs}\label{obs:4}
For any subset $S$, we have that $(\sum_{k\in S} L_{i,k})^2\leq 2m\cdot f_{i}(S)$ and $(\sum_{k\in S} L_{j,k})^2\leq 2m\cdot f_{j}(S)$. 
\end{obs}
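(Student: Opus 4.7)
The plan is to unfold the definition of $f_{i}(S)$ (resp.\ $f_{j}(S)$) and observe that the claimed bound reduces to the nonnegativity of a sum of squares. Specifically, from the definition $f_{i}(S) = \frac{\sum_{k\in S} L_{i,k}^{2} + (\sum_{k\in S} L_{i,k})^{2}}{2m}$, multiplying through gives $2m\cdot f_{i}(S) = \sum_{k\in S} L_{i,k}^{2} + (\sum_{k\in S} L_{i,k})^{2}$, so the desired inequality $(\sum_{k\in S} L_{i,k})^{2} \leq 2m\cdot f_{i}(S)$ is equivalent to $0 \leq \sum_{k\in S} L_{i,k}^{2}$.

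So the first step is to record that all loads $L_{i,k} = \sum_{j\in\mathcal{J}} d_{i,j,k}$ are nonnegative (indeed, they are sums of nonnegative flow sizes $d_{i,j,k}$), which makes the squared terms $L_{i,k}^{2}$ nonnegative and their sum nonnegative. The second step is to note that the identical argument applies verbatim to $f_{j}(S)$ with $L_{j,k}$ in place of $L_{i,k}$, since $f_{j}(S)$ has exactly the same structural form.

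There is no real obstacle here: the observation is an immediate consequence of the definition of $f_{i}$ and $f_{j}$, and serves as the coflow-level analog of Observation~\ref{obs:2}. Its purpose is to let us later bound expressions of the form $(\sum_{k\in S}L_{i,k})^{2}/m$ or $(\sum_{k\in S}L_{j,k})^{2}/m$ by $2\cdot f_{i}(S)$ or $2\cdot f_{j}(S)$ respectively, which is precisely the step that will appear in the weighted completion time analysis (mirroring how Observation~\ref{obs:2} is invoked in the proof of Lemma~\ref{lem:lem3}). The proof itself will be only a couple of lines.
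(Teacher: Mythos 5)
Your proof is correct and matches the paper's (implicit) justification: the paper states this observation without proof precisely because, as you note, $2m\cdot f_{i}(S) = \sum_{k\in S} L_{i,k}^{2} + \left(\sum_{k\in S} L_{i,k}\right)^{2}$ by definition, so the bound is immediate from the nonnegativity of $\sum_{k\in S} L_{i,k}^{2}$, and symmetrically for $f_{j}(S)$. No gaps; this is the intended argument.
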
 

\begin{lem}\label{lem:lem21}
Let $C_{k}$ represent the completion time of coflow $k$ when scheduled according to Algorithm~\ref{Alg2}. For any coflow $k$, we have $C_{k}\leq a\cdot \max_{k'\leq k}r_k+L_{\mu_1(k)}(S_{k})+L_{\mu_2(k)}(S_{k})$, where $a=0$ signifies the absence of release times, and $a=1$ indicates the presence of arbitrary release times.
\end{lem}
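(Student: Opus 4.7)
The plan is to reduce the analysis to the behavior on a single network core. Let $h^*$ denote the core to which Algorithm~\ref{Alg2} assigned coflow $k$, and let $\mathcal{A}_{h^*}$ be the set of coflows routed through this core. Because Algorithm~\ref{Alg2} processes the released flows on each core in increasing order of coflow index and transmits at rate one whenever a link is free, once we restrict attention to $h^*$ the schedule there is a classical coflow schedule on a single $N\times N$ non-blocking switch handling the coflows in $\mathcal{A}_{h^*}\cap S_k$.

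Let $(i^*, j^*, k)$ be the last flow of coflow $k$ to complete on $h^*$, so that $C_k = C_{i^*,j^*,k}$. I would then run a standard busy-link argument on the link $(i^*,j^*)$ of core $h^*$, starting the clock at $a\cdot\max_{k'\le k,\,k'\in \mathcal{A}_{h^*}}r_{k'}$. From that point until $(i^*,j^*,k)$ completes, at every instant at least one of the following holds: (i) link $(i^*,j^*)$ is transmitting some flow $(i^*,j^*,k')$ with $k'\le k$ and $k'\in\mathcal{A}_{h^*}$; (ii) input port $i^*$ is occupied by a different flow of some such coflow $k'$; or (iii) output port $j^*$ is occupied by a different flow of some such coflow $k'$. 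The ``all idle'' case cannot occur while $(i^*,j^*,k)$ is pending and released, because the inner scheduling loop of Algorithm~\ref{Alg2} would then launch it. Summing the maximum durations of cases (i)--(iii) gives
\begin{equation*}
C_k \;\le\; a\cdot\!\!\max_{k'\le k,\,k'\in \mathcal{A}_{h^*}}\!\!r_{k'} \;+\; \sum_{k'\le k,\,k'\in \mathcal{A}_{h^*}}L_{i^*,k'} \;+\; \sum_{k'\le k,\,k'\in \mathcal{A}_{h^*}}L_{j^*,k'},
\end{equation*}
where a small double-counting of $d_{i^*,j^*,k'}$ across the two sums is harmless since we only need an upper bound.

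Finally I would pass from these core-restricted loads to the global loads that appear in the statement. For any input port $i$ one has $\sum_{k'\le k,\,k'\in \mathcal{A}_{h^*}}L_{i,k'}\le \sum_{k'\le k}L_{i,k'}=L_i(S_k)\le L_{\mu_1(k)}(S_k)$, and symmetrically on the output side we obtain an upper bound of $L_{\mu_2(k)}(S_k)$. Combined with $\max_{k'\le k,\,k'\in \mathcal{A}_{h^*}}r_{k'}\le \max_{k'\le k}r_{k'}$, this yields the claimed inequality, with $a=1$ when release times are arbitrary and $a=0$ when they vanish.

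The main obstacle I expect is making the busy-link argument rigorous in the presence of Algorithm~\ref{Alg2}'s matching-style priority rule: one must carefully verify that whenever $(i^*,j^*,k)$ is pending and released, link $(i^*,j^*)$ cannot be idle with both input $i^*$ and output $j^*$ simultaneously idle, so that cases (i)--(iii) genuinely cover all relevant time. This requires tracking how the inner ``for every flow in $L$'' loop can delay $(i^*,j^*,k)$ only through port congestion caused by coflows of index $\le k$ assigned to $h^*$, and not through any other scheduling artifact; the rest of the argument is a routine accounting of port loads.
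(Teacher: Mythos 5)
Your proposal is correct and takes essentially the same approach as the paper's proof: a busy-port argument for the last-finishing flow $(i^*,j^*,k)$ of coflow $k$, followed by bounding the resulting port loads by those of the most heavily loaded ports $\mu_1(k)$ and $\mu_2(k)$. The only difference is presentational — you first restrict the busy argument to the assigned core $h^*$ and then relax the core-restricted loads to the global loads $L_i(S_k)$ and $L_j(S_k)$, whereas the paper writes the global bound directly; your version simply makes explicit the step the paper glosses over.
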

\begin{proof}
We assume that the last completed flow in coflow $k$ is $(i, j, k)$. Since all links $(i, j)$ in the network cores are busy from $a\cdot \max_{k'\leq k} r_{k'}$ to the start of flow $(i, j, k)$, we have:
\begin{eqnarray*}
C_{k}         & \leq   & a\cdot \max_{k'\leq k} r_{k'} + \sum_{k\in S_{k}} \left(L_{i, k}+L_{j, k}\right) \\
              & =      & a\cdot \max_{k'\leq k} r_{k'} + L_{i}(S_{k})+L_{j}(S_{k})
\end{eqnarray*}
Since $L_{\mu_1(k)}(S_{k})\geq L_{i}(S_{k})$ for all $i\in \mathcal{I}$ and $L_{\mu_2(k)}(S_{k})\geq L_{j}(S_{k})$ for all $j\in \mathcal{J}$, we have $C_{k}\leq a\cdot \max_{k'\leq k}r_k+L_{\mu_1(k)}(S_{k})+L_{\mu_2(k)}(S_{k})$. This proof confirms the lemma.
\end{proof}

\begin{lem}\label{lem:lem22}
For every coflow $k$, $\sum_{i \in \mathcal{I}} \alpha_{i, k}+\sum_{j \in \mathcal{J}} \alpha_{j, k}+\sum_{i \in \mathcal{I}}\sum_{k'\geq k}\beta_{i,k'}L_{i,k}+\sum_{j \in \mathcal{J}}\sum_{k'\geq k}\beta_{j,k'}L_{j,k}= w_{k}$.
\end{lem}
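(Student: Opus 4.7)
The plan is to mirror the proof of Lemma~\ref{lem:lem2} almost verbatim, adapting it to the coflow-level setting. The essential content is that Algorithm~\ref{Alg2_dual} processes coflows in reverse order and, whenever it outputs $\sigma(r) = k$ in some iteration $r$, the dual constraint~(\ref{incoflow:dual:a}) for $k$ is driven to equality — either by setting the slack $\alpha_{\mu_1(r), k}$ (resp.\ $\alpha_{\mu_2(r), k}$) exactly equal to the remaining residual $w_k - \sum\beta L$ in the $\alpha$-branch of the if-statement, or by choosing $\beta_{\mu_1(r),\mathcal{K}}$ (resp.\ $\beta_{\mu_2(r),\mathcal{K}}$) precisely as the value that first saturates the $k'$ that attains the arg-min in the $\beta$-branch.

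First I would observe that since every coflow is eventually removed from $\mathcal{K}$ at some iteration $r$, the event ``$k$ is selected'' does occur for each $k$, so the tightness statement $\sum_{i} \alpha_{i,k}+\sum_{j} \alpha_{j,k}+\sum_{i}\sum_{S\ni k}\beta_{i,S}L_{i,k}+\sum_{j}\sum_{S\ni k}\beta_{j,S}L_{j,k}= w_{k}$ applies to every coflow. Next I would translate the $\sum_{S\ni k}$ formulation into the $\sum_{k'\ge k}$ formulation used in the lemma. Here I invoke Observation~\ref{obs:3} parts~(\ref{obs:3-1}) and~(\ref{obs:3-2}): every nonzero $\beta_{i,S}$ (resp.\ $\beta_{j,S}$) equals $\beta_{\mu_1(k'),k'}$ (resp.\ $\beta_{\mu_2(k'),k'}$) for some $k'$, where the associated set is $S_{k'} = \{1, \dots, k'\}$. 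Hence $k \in S$ for such a nonzero variable iff $k \le k'$, matching the index set $k' \ge k$ in the lemma, and the notation $\beta_{i,k'} = \beta_{i,S_{k'}}$ introduced just before the lemma completes the substitution.

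Finally, once the rewriting is done, the equality in the lemma is exactly the tightness of constraint~(\ref{incoflow:dual:a}) at the iteration where $k$ is scheduled; no $\beta$-variables set in later iterations (i.e., smaller $r$) can contribute, because after removing $k$ from $\mathcal{K}$ the algorithm only raises $\beta_{\cdot,\mathcal{K}}$ with $k \notin \mathcal{K}$, so those variables have $L_{\cdot,k} = 0$ multiplier in $k$'s constraint, or equivalently their index sets do not contain $k$. I expect no real obstacle here — the argument is essentially a bookkeeping check that the tightness event described in the algorithm is precisely the claimed equation once $\beta_{i,S_{k'}}$ is renamed to $\beta_{i,k'}$; the only care needed is to verify that the constraint residual ``$w_k - \sum\beta L$'' the algorithm zeros out involves exactly the sum over $k' \ge k$ that appears in the lemma statement.
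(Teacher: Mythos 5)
Your proposal is correct and follows essentially the same route as the paper's own proof: the paper establishes Lemma~\ref{lem:lem22} by the one-line observation that a coflow $k$ enters the permutation of Algorithm~\ref{Alg2_dual} only when its dual constraint~(\ref{incoflow:dual:a}) becomes tight, which is exactly your tightness argument. Your extra bookkeeping --- rewriting $\sum_{S\ni k}$ as $\sum_{k'\geq k}$ via Observation~\ref{obs:3} parts~(\ref{obs:3-1}) and~(\ref{obs:3-2}), and noting that $\beta$-variables raised in later iterations have index sets not containing $k$ and so cannot disturb the equality --- simply makes explicit what the paper leaves implicit.
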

\begin{proof}
A coflow $k$ is included in the permutation of Algorithm~\ref{Alg2_dual} only if the constraint 
$\sum_{i \in \mathcal{I}} \alpha_{i, k}+\sum_{j \in \mathcal{J}} \alpha_{j, k} +\sum_{i \in \mathcal{I}}\sum_{S\subseteq \mathcal{K}/k\in S}\beta_{i,S}L_{i,k} +\sum_{j \in \mathcal{J}}\sum_{S\subseteq \mathcal{K}/k\in S}\beta_{j,S}L_{j,k}\leq w_{k}$ becomes tight for this particular coflow, resulting in $\sum_{i \in \mathcal{I}} \alpha_{i, k}+\sum_{j \in \mathcal{J}} \alpha_{j, k}+\sum_{i \in \mathcal{I}}\sum_{k'\geq k}\beta_{i,k'}L_{i,k}+\sum_{j \in \mathcal{J}}\sum_{k'\geq k}\beta_{j,k'}L_{j,k}= w_{k}$.
\end{proof}

\begin{lem}\label{lem:lem23}
If there is an algorithm that generates a feasible coflow schedule such that for any coflow $k$, $C_{k}\leq a\cdot \max_{k'\leq k}r_k+L_{\mu_1(k)}(S_{k})+L_{\mu_2(k)}(S_{k})$ for some constants $a$, then the total cost of the schedule is bounded as follows.
\begin{eqnarray*}
\sum_{k}w_{k}C_{k} & \leq & \left(a+\frac{2m}{\kappa}\right)\sum_{k \in \mathcal{K}}\sum_{i \in \mathcal{I}} \alpha_{i, k}(r_{k})+\left(a+\frac{2m}{\kappa}\right)\sum_{k \in \mathcal{K}}\sum_{j \in \mathcal{J}} \alpha_{j, k}(r_k) \\
                   &      & +2\left(a\cdot \kappa+2m\right)\sum_{i \in \mathcal{I}}\sum_{S \subseteq \mathcal{K}}\beta_{i,S} f_{i}(S)+2\left(a\cdot \kappa+2m\right)\sum_{j \in \mathcal{J}}\sum_{S \subseteq \mathcal{K}}\beta_{j,S} f_{j}(S) 
\end{eqnarray*}
\end{lem}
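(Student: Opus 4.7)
The plan is to follow the template of Lemma~\ref{lem:lem3} but with the simpler per-coflow bound from Lemma~\ref{lem:lem21}, which has no $(1-2/m)\max_{i,j}d_{i,j,k}$ slack term. First I would substitute $C_k \leq A$, where $A := a\cdot \max_{k'\leq k} r_{k'} + L_{\mu_1(k)}(S_k) + L_{\mu_2(k)}(S_k)$, into $\sum_k w_k C_k$, and then invoke Lemma~\ref{lem:lem22} to rewrite each $w_k$ as the sum of four dual contributions $\sum_i \alpha_{i,k} + \sum_j \alpha_{j,k} + \sum_i \sum_{k'\geq k} \beta_{i,k'} L_{i,k} + \sum_j \sum_{k'\geq k} \beta_{j,k'} L_{j,k}$. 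This expresses $\sum_k w_k C_k$ as four sums, each of which I would bound separately and then recombine.

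For the $\alpha$ sums I would use Observation~\ref{obs:3} parts (\ref{obs:3-5})--(\ref{obs:3-7}). Part (\ref{obs:3-7}) collapses $\max_{k'\leq k} r_{k'}$ to $r_k$, and parts (\ref{obs:3-5})--(\ref{obs:3-6}) give $L_{\mu_1(k)}(S_k) < mr_k/\kappa$ (or the analogous output-side inequality). Because the branch of Algorithm~\ref{Alg2_dual} that activates $\alpha_{\mu_1(k),k}$ is taken only when $L_{\mu_1(k)}(S_k) \geq L_{\mu_2(k)}(S_k)$, the other port load is also at most $mr_k/\kappa$, yielding $A \leq (a + 2m/\kappa)\,r_k$. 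Hence $\sum_k \sum_i \alpha_{i,k} A \leq (a + 2m/\kappa) \sum_k \sum_i \alpha_{i,k} r_k$, and the $\alpha_{j,k}$ sum is bounded symmetrically.

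For the $\beta_{i,k'}$ sum, Observation~\ref{obs:3} part~(\ref{obs:3-3}) supplies $a\max_{k''\leq k}r_{k''} \leq a\kappa L_{\mu_1(k')}(S_{k'})/m$ whenever $k \leq k'$, while the nesting $S_k\subseteq S_{k'}$ gives $L_{\mu_1(k)}(S_k) \leq L_{\mu_1(k')}(S_{k'})$. Since $\beta_{\mu_1(k'),k'}>0$ forces the input-dominant branch at iteration $k'$, I additionally obtain $L_{\mu_2(k)}(S_k) \leq L_{\mu_2(k')}(S_{k'}) \leq L_{\mu_1(k')}(S_{k'})$, so $A \leq (a\kappa + 2m)\,L_{\mu_1(k')}(S_{k'})/m$. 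Swapping the order of summation and using $\sum_{k\leq k'} L_{\mu_1(k'),k} = L_{\mu_1(k')}(S_{k'})$, the double sum collapses into $\sum_{k'} \beta_{\mu_1(k'),k'}\, L_{\mu_1(k')}(S_{k'})^2$. Observation~\ref{obs:4} then replaces $L_{\mu_1(k')}(S_{k'})^2$ by $2m\cdot f_{\mu_1(k')}(S_{k'})$, and Observation~\ref{obs:3} part (\ref{obs:3-1}) rewrites the result as $2(a\kappa + 2m)\sum_i \sum_{S\subseteq\mathcal{K}} \beta_{i,S} f_i(S)$. A completely symmetric chain using parts (\ref{obs:3-2}) and (\ref{obs:3-4}) handles the $\beta_{j,k'}$ sum. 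Summing the four bounds produces the inequality claimed in the lemma.

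The main obstacle I anticipate is the \emph{cross-side} term $L_{\mu_2(k)}(S_k)$ appearing inside a sum indexed by input-side duals $\beta_{i,k'}$, since the observations only directly control the side matching the dual. The resolution is that Algorithm~\ref{Alg2_dual} raises $\beta_{\mu_1(k'),k'}$ only on iterations where the input side dominates, so $L_{\mu_2(k')}(S_{k'}) \leq L_{\mu_1(k')}(S_{k'})$ is automatic; this absorbs the output-side load into $L_{\mu_1(k')}(S_{k'})$ and lets the remaining computation mirror the flow-level analysis of Lemma~\ref{lem:lem3}.
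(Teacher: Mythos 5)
Your proposal is correct and follows essentially the same route as the paper's proof: substitute the per-coflow bound, decompose $w_k$ via Lemma~\ref{lem:lem22}, bound the $\alpha$ sums using Observation~\ref{obs:3} parts (\ref{obs:3-5})--(\ref{obs:3-7}), and bound the $\beta$ sums by swapping the order of summation and applying Observation~\ref{obs:3} parts (\ref{obs:3-1})--(\ref{obs:3-4}) together with Observation~\ref{obs:4}. If anything, your argument is slightly more careful than the paper's: you explicitly justify the cross-side step $L_{\mu_2(k)}(S_k)\leq L_{\mu_1(k')}(S_{k'})$ via the input-dominant branch condition of Algorithm~\ref{Alg2_dual} and keep the indices $S_{k'}$ straight when invoking part (\ref{obs:3-3}), both of which the paper's chain of inequalities glosses over.
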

\begin{proof}
By applying Lemma~\ref{lem:lem21}, we have
\begin{eqnarray*}
 \sum_{k=1}^{n} w_{k}C_{k} \leq \sum_{k=1}^{n} w_{k}\cdot \left(a\cdot \max_{k'\leq k}r_k+L_{\mu_1(k)}(S_{k})+L_{\mu_2(k)}(S_{k})\right)
\end{eqnarray*}
Let $A=a\cdot \max_{k'\leq k}r_k+L_{\mu_1(k)}(S_{k})+L_{\mu_2(k)}(S_{k})$. By applying Lemma~\ref{lem:lem22}, we have
\begin{eqnarray*}
\sum_{k=1}^{n} w_{k}C_{k}   & \leq & \sum_{k=1}^{n}\left(\sum_{i \in \mathcal{I}} \alpha_{i, k}+\sum_{k=1}^{n}\sum_{j \in \mathcal{J}} \alpha_{j, k}\right)\cdot A +\sum_{k=1}^{n}\sum_{i \in \mathcal{I}}\sum_{k'\geq k}\beta_{i,k'}L_{i,k} \cdot A +\sum_{k=1}^{n}\sum_{j \in \mathcal{J}}\sum_{k'\geq k}\beta_{j,k'}L_{j,k} \cdot A\\
\end{eqnarray*}
Let's begin by bounding $\sum_{k=1}^{n}\sum_{i \in \mathcal{I}} \alpha_{i, k} \cdot A+\sum_{k=1}^{n}\sum_{j \in \mathcal{J}} \alpha_{j, k} \cdot A$.
By applying Observation~\ref{obs:3} parts (\ref{obs:3-5}), (\ref{obs:3-6}) and (\ref{obs:3-7}), we have
\begin{flalign*}
      & \sum_{k=1}^{n}\left(\sum_{i \in \mathcal{I}} \alpha_{i, k}+\sum_{k=1}^{n}\sum_{j \in \mathcal{J}} \alpha_{j, k}\right)\cdot A \\
\leq  & \sum_{k=1}^{n}\left(\sum_{i \in \mathcal{I}} \alpha_{i, k}+\sum_{k=1}^{n}\sum_{j \in \mathcal{J}} \alpha_{j, k}\right)\left(a\cdot r_{k}+2m\cdot \frac{r_{k}}{\kappa}\right) \\
\leq  & \left(a+\frac{2m}{\kappa}\right)\sum_{k=1}^{n}\left(\sum_{i \in \mathcal{I}} \alpha_{i, k}+\sum_{k=1}^{n}\sum_{j \in \mathcal{J}} \alpha_{j, k}\right)\cdot r_{k}
\end{flalign*}
Now we bound $\sum_{k=1}^{n}\sum_{i \in \mathcal{I}}\sum_{k'\geq k}\beta_{i,k'}L_{i,k} \cdot A$. By applying Observation~\ref{obs:3} part (\ref{obs:3-3}), we have
\begin{flalign*}
      & \sum_{k=1}^{n}\sum_{i \in \mathcal{I}}\sum_{k'\geq k}\beta_{i,k'}L_{i,k} \cdot A\\
\leq  & \sum_{k=1}^{n}\sum_{i \in \mathcal{I}}\sum_{k'\geq k}\beta_{i,k'}L_{i,k}\left\{a\cdot \max_{k'\leq k}r_k+L_{\mu_1(k)}(S_{k})+L_{\mu_2(k)}(S_{k})\right\} \\
\leq  & \sum_{k=1}^{n}\sum_{i \in \mathcal{I}}\sum_{k'\geq k}\beta_{i,k'}L_{i,k}\left\{a\cdot \kappa \cdot \frac{L_{\mu_1(k)}(S_{k})}{m} + 2\cdot L_{\mu_1(k)}(S_{k})\right\} \\ 
\leq  & \left(a\cdot \kappa+2m\right)\sum_{k'=1}^{n}\sum_{i \in \mathcal{I}}\sum_{k\leq k'}\beta_{i,k'}L_{i,k}\frac{L_{\mu_1(k)}(S_{k})}{m} \\  
\leq  & \left(a\cdot \kappa+2m\right)\sum_{k'=1}^{n}\sum_{i \in \mathcal{I}}\beta_{i,k'}\sum_{k\leq k'}L_{i,k}\frac{L_{\mu_1(k)}(S_{k})}{m} \\
=     & \left(a\cdot \kappa+2m\right)\sum_{k'=1}^{n}\sum_{i \in \mathcal{I}}\beta_{i,k'}L_{i}(S_{k})\frac{L_{\mu_1(k)}(S_{k})}{m} \\
\leq  & \left(a\cdot \kappa+2m\right)\sum_{k'=1}^{n}\sum_{i \in \mathcal{I}}\beta_{i,k'}\frac{\left(L_{\mu_1(k)}(S_{k})\right)^2}{m}
\end{flalign*}
By sequentially applying Observation~\ref{obs:4} and Observation~\ref{obs:3} part (\ref{obs:3-1}), we can upper bound this expression by
\begin{flalign*}
   & 2\left(a\cdot \kappa+2m\right)\sum_{i \in \mathcal{I}}\sum_{k=1}^{n}\beta_{i,k}f_{i}(S_{\mu_1(k),k}) \\
=  & 2\left(a\cdot \kappa+2m\right)\sum_{k=1}^{n}\beta_{\mu_1(k),k}f_{i}(S_{\mu_1(k),k}) \\
\leq  & 2\left(a\cdot \kappa+2m\right)\sum_{i \in \mathcal{I}}\sum_{S\subseteq \mathcal{K}}\beta_{i,S}f_{i}(S)
\end{flalign*}
By Observation~\ref{obs:4} and Observation~\ref{obs:3} parts (\ref{obs:3-2}) and (\ref{obs:3-4}), we also can obtain 
\begin{flalign*}
\sum_{k=1}^{n}\sum_{j \in \mathcal{J}}\sum_{k'\geq k}\beta_{j,k'}L_{j,k} \cdot A \leq 2\left(a\cdot \kappa+2m\right)\sum_{j \in \mathcal{J}}\sum_{S\subseteq \mathcal{K}}\beta_{j,S}f_{j}(S)
\end{flalign*}
Therefore,
\begin{eqnarray*}
\sum_{k}w_{k}C_{k} & \leq & \left(a+\frac{2m}{\kappa}\right)\sum_{k \in \mathcal{K}}\sum_{i \in \mathcal{I}} \alpha_{i, k}(r_{k})+\left(a+\frac{2m}{\kappa}\right)\sum_{k \in \mathcal{K}}\sum_{j \in \mathcal{J}} \alpha_{j, k}(r_k) \\
                   &      & +2\left(a\cdot \kappa+2m\right)\sum_{i \in \mathcal{I}}\sum_{S \subseteq \mathcal{K}}\beta_{i,S} f_{i}(S)+2\left(a\cdot \kappa+2m\right)\sum_{j \in \mathcal{J}}\sum_{S \subseteq \mathcal{K}}\beta_{j,S} f_{j}(S) 
\end{eqnarray*}
\end{proof}

\begin{thm}\label{thm:thm21}
There exists a deterministic, combinatorial, polynomial time algorithm that achieves an approximation ratio of $4m+1$ for the coflow-level scheduling problem with release times.
\end{thm}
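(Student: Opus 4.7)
The plan is to follow exactly the template of Theorem~\ref{thm:thm1}, substituting Lemma~\ref{lem:lem23} in place of Lemma~\ref{lem:lem3}. Since Algorithm~\ref{Alg2} produces a schedule satisfying the completion-time bound $C_{k}\leq \max_{k'\leq k}r_{k'}+L_{\mu_1(k)}(S_{k})+L_{\mu_2(k)}(S_{k})$ (the case $a=1$ of Lemma~\ref{lem:lem21}), Lemma~\ref{lem:lem23} immediately yields
\begin{eqnarray*}
\sum_{k}w_{k}C_{k} & \leq & \left(1+\frac{2m}{\kappa}\right)\sum_{k \in \mathcal{K}}\sum_{i \in \mathcal{I}} \alpha_{i, k}\, r_{k}+\left(1+\frac{2m}{\kappa}\right)\sum_{k \in \mathcal{K}}\sum_{j \in \mathcal{J}} \alpha_{j, k}\, r_{k} \\
                   &      & +\,2\left(\kappa+2m\right)\sum_{i \in \mathcal{I}}\sum_{S \subseteq \mathcal{K}}\beta_{i,S} f_{i}(S)+2\left(\kappa+2m\right)\sum_{j \in \mathcal{J}}\sum_{S \subseteq \mathcal{K}}\beta_{j,S} f_{j}(S).
\end{eqnarray*}

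The next step is to pick the constant $\kappa$ so that the two prefactors coincide, which happens exactly when $1+2m/\kappa=2\kappa+4m$, i.e.\ for $\kappa=\tfrac{1}{2}$; a direct calculation then shows that both prefactors equal $4m+1$. Substituting, every term on the right-hand side is bounded by a single factor of $(4m+1)$ times the corresponding term in the dual objective~(\ref{incoflow:dual}), using the trivial facts $\alpha_{i,k}\, r_{k}\leq \alpha_{i,k}(r_{k}+L_{i,k})$ and $\alpha_{j,k}\, r_{k}\leq \alpha_{j,k}(r_{k}+L_{j,k})$ since all loads are nonnegative. Therefore
\begin{eqnarray*}
\sum_{k}w_{k}C_{k} \;\leq\; (4m+1)\cdot\mathrm{DUAL}(\alpha,\beta) \;\leq\; (4m+1)\cdot OPT,
\end{eqnarray*}
where the last inequality uses weak LP duality together with the fact (visible from Algorithm~\ref{Alg2_dual}) that the $\alpha,\beta$ constructed by the primal-dual procedure are feasible for the dual~(\ref{incoflow:dual}).

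No obstacle is anticipated beyond verifying this feasibility and the choice of $\kappa$; notice that, in contrast to Theorem~\ref{thm:thm1}, no additional $(1-2/m)\cdot OPT$ term appears here because Lemma~\ref{lem:lem21} does not introduce a ``maximum flow'' residual. The only subtle point is making sure that the coflow-level list schedule really achieves the bound of Lemma~\ref{lem:lem21} with $a=1$, but this is guaranteed by the core-assignment rule in lines~\ref{alg2-2}--\ref{alg2-3} of Algorithm~\ref{Alg2} together with the fact that $\mu_1(k)$ and $\mu_2(k)$ are by definition the busiest input and output ports when all of $S_k$ is considered. Combining these ingredients gives the claimed $(4m+1)$-approximation.
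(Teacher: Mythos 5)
Your proposal is correct and follows essentially the same route as the paper: apply Lemma~\ref{lem:lem23} with $a=1$ (its hypothesis supplied by Lemma~\ref{lem:lem21} for Algorithm~\ref{Alg2}), set $\kappa=\tfrac{1}{2}$ so that both prefactors become $4m+1$, and bound the resulting expression by $(4m+1)$ times the feasible dual objective, hence by $(4m+1)\cdot OPT$ via weak duality. Your added remarks (equalizing the two prefactors, the explicit $\alpha_{i,k}r_k\leq\alpha_{i,k}(r_k+L_{i,k})$ step, and the absence of the $(1-2/m)\cdot OPT$ residual) only make explicit what the paper leaves implicit.
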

\begin{proof}
To schedule coflows without release times, the application of Lemma~\ref{lem:lem23} (with $a = 1$) indicates the following:
\begin{eqnarray*}
\sum_{k}w_{k}C_{k} & \leq & \left(1+\frac{2m}{\kappa}\right)\sum_{k \in \mathcal{K}}\sum_{i \in \mathcal{I}} \alpha_{i, k}(r_{k})+\left(1+\frac{2m}{\kappa}\right)\sum_{k \in \mathcal{K}}\sum_{j \in \mathcal{J}} \alpha_{j, k}(r_k) \\
                   &      & +2\left(\kappa+2m\right)\sum_{i \in \mathcal{I}}\sum_{S \subseteq \mathcal{K}}\beta_{i,S} f_{i}(S)+2\left(\kappa+2m\right)\sum_{j \in \mathcal{J}}\sum_{S \subseteq \mathcal{K}}\beta_{j,S} f_{j}(S) 
\end{eqnarray*}
In order to minimize the approximation ratio, we can substitute $\kappa=\frac{1}{2}$ and obtain the following result:
\begin{eqnarray*}
\sum_{k}w_{k}C_{k} & \leq & \left(4m+1\right)\sum_{k \in \mathcal{K}}\sum_{i \in \mathcal{I}} \alpha_{i, k}(r_{k})+\left(4m+1\right)\sum_{k \in \mathcal{K}}\sum_{j \in \mathcal{J}} \alpha_{j, k}(r_k) \\
                   &      & +\left(4m+1\right)\sum_{i \in \mathcal{I}}\sum_{S \subseteq \mathcal{K}}\beta_{i,S} f_{i}(S)+\left(4m+1\right)\sum_{j \in \mathcal{J}}\sum_{S \subseteq \mathcal{K}}\beta_{j,S} f_{j}(S) \\
									 & \leq & \left(4m+1\right) \cdot OPT.
\end{eqnarray*}
\end{proof}

\begin{thm}\label{thm:thm22}
There exists a deterministic, combinatorial, polynomial time algorithm that achieves an approximation ratio of $4m$ for the coflow-level scheduling problem without release times.
\end{thm}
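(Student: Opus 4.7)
The plan is to mirror the proof of Theorem~\ref{thm:thm21}, but instantiate Lemma~\ref{lem:lem23} with $a=0$ since no release times are present. Applying the lemma would give
\[
\sum_{k}w_{k}C_{k} \leq \frac{2m}{\kappa}\sum_{k \in \mathcal{K}}\sum_{i \in \mathcal{I}} \alpha_{i, k}\,r_{k}+\frac{2m}{\kappa}\sum_{k \in \mathcal{K}}\sum_{j \in \mathcal{J}} \alpha_{j, k}\,r_k+4m\sum_{i \in \mathcal{I}}\sum_{S \subseteq \mathcal{K}}\beta_{i,S}\, f_{i}(S)+4m\sum_{j \in \mathcal{J}}\sum_{S \subseteq \mathcal{K}}\beta_{j,S}\, f_{j}(S).
\]

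Next I would exploit that $r_k = 0$ for every coflow, which wipes out both $\alpha$-sums regardless of the value of $\kappa$; for concreteness (and consistency with the other theorems) I would set $\kappa = \tfrac{1}{2}$. The inequality then collapses to
\[
\sum_{k}w_{k}C_{k} \leq 4m\Bigl(\sum_{i \in \mathcal{I}}\sum_{S \subseteq \mathcal{K}}\beta_{i,S}\, f_{i}(S)+\sum_{j \in \mathcal{J}}\sum_{S \subseteq \mathcal{K}}\beta_{j,S}\, f_{j}(S)\Bigr).
\]
Finally, by weak LP duality, the dual objective of (\ref{incoflow:dual}) lower-bounds $OPT$; since the $\alpha$-weighted contributions $\alpha_{i,k}(r_k+L_{i,k})$ and $\alpha_{j,k}(r_k+L_{j,k})$ are non-negative, the two $\beta$-sums above are individually at most $OPT$, and the bound becomes $\sum_k w_k C_k \leq 4m\cdot OPT$.

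The only point that requires care, rather than a genuine obstacle, is confirming that the dual solution produced by Algorithm~\ref{Alg2_dual} remains feasible when $r_k=0$, so that weak duality can legitimately be invoked. This is immediate from the algorithm: the branch that raises an $\alpha$ variable is guarded by the test $r_k > \kappa\cdot L_{\mu_1(r)}/m$ (resp.\ $L_{\mu_2(r)}$), which fails trivially at $r_k=0$, so only $\beta$ variables are ever increased, and each is raised precisely to the point where the next dual constraint becomes tight. Hence feasibility of the dual is preserved throughout, completing the argument.
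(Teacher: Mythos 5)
Your proposal is correct and follows the paper's own proof essentially verbatim: apply Lemma~\ref{lem:lem23} with $a=0$, substitute $\kappa=\tfrac{1}{2}$, observe that the $\alpha$-terms vanish (or are harmless) when $r_k=0$, and bound the remaining $\beta$-terms by $OPT$ via weak duality, exactly as the paper does. One minor wording repair: what your non-negativity observation actually establishes (and what the conclusion requires) is that the \emph{sum} of the two $\beta$-sums is at most the dual objective and hence at most $OPT$; the phrase ``individually at most $OPT$'' would, if applied term by term, only give $8m\cdot OPT$.
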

\begin{proof}
To schedule coflows without release times, the application of Lemma~\ref{lem:lem23} (with $a = 0$) indicates the following:
\begin{eqnarray*}
\sum_{k}w_{k}C_{k} & \leq & \left(\frac{2m}{\kappa}\right)\sum_{k \in \mathcal{K}}\sum_{i \in \mathcal{I}} \alpha_{i, k}(r_{k})+\left(\frac{2m}{\kappa}\right)\sum_{k \in \mathcal{K}}\sum_{j \in \mathcal{J}} \alpha_{j, k}(r_k) \\
                   &      & +2\left(2m\right)\sum_{i \in \mathcal{I}}\sum_{S \subseteq \mathcal{K}}\beta_{i,S} f_{i}(S)+2\left(2m\right)\sum_{j \in \mathcal{J}}\sum_{S \subseteq \mathcal{K}}\beta_{j,S} f_{j}(S) 
\end{eqnarray*}
In order to minimize the approximation ratio, we can substitute $\kappa=\frac{1}{2}$ and obtain the following result:
\begin{eqnarray*}
\sum_{k}w_{k}C_{k} & \leq & \left(\frac{2m}{\kappa}\right)\sum_{k \in \mathcal{K}}\sum_{i \in \mathcal{I}} \alpha_{i, k}(r_{k})+\left(\frac{2m}{\kappa}\right)\sum_{k \in \mathcal{K}}\sum_{j \in \mathcal{J}} \alpha_{j, k}(r_k) \\
                   &      & +2\left(2m\right)\sum_{i \in \mathcal{I}}\sum_{S \subseteq \mathcal{K}}\beta_{i,S} f_{i}(S)+2\left(2m\right)\sum_{j \in \mathcal{J}}\sum_{S \subseteq \mathcal{K}}\beta_{j,S} f_{j}(S) \\
									 & \leq & 4m \cdot OPT.
\end{eqnarray*}
\end{proof}


\section{Results and Discussion}\label{sec:Results}
This section performs simulations to assess the performance of the proposed algorithm in comparison to a previous algorithm, using both synthetic and real traffic traces without a release time. The simulation results are presented and analyzed in the subsequent sections.

\subsection{Workload}
We employed the model presented in~\cite{shafiee2018improved} to generate synthetic traces for our evaluation. For each coflow, we are provided with a coflow description in the form of $(W_{min}, W_{max}, L_{min}, L_{max})$. The number of non-zero flows in each coflow, denoted as $M$, is determined as the product of two randomly chosen values, $w_1$ and $w_2$, both falling within the interval $[W_{min}, W_{max}]$. Additionally, the input links are assigned $w_1$ and the output links are assigned $w_2$ in a random manner. The size of each flow is randomly selected from the interval $[L_{min}, L_{max}]$. The default configuration for constructing all coflows follows a specific percentage distribution based on the coflow descriptions: $(1, 4, 1, 10)$, $(1, 4, 10, 1000)$, $(4, N, 1, 10)$, and $(4, N, 10, 1000)$, with proportions of $41\%$, $29\%$, $9\%$, and $21\%$, respectively. Here, $N$ represents the number of ports in the core. In comparing the effects of flow density, the coflows were categorized into three instances based on their sparsity: dense, sparse, and combined. For each instance, we randomly selected $M$ flows from either the set $\left\{N, N+1, \ldots, N^2\right\}$ or $\left\{1, 2, \ldots, N\right\}$ depending on the specific instance. In the combined instance, each coflow was classified as either sparse or dense with equal probability. Subsequently, flow sizes were randomly assigned to each flow, following a uniform distribution over the range $\left\{1, 2, \dots, 100\right\}$ MB. The links of the switching core had a capacity of 128 MBps, and each time unit corresponded to 1/128 of a second (8 milliseconds), which equated to 1 MB per time unit. We generated 100 instances for each case and calculated the average performance of the algorithm.

The real traffic trace utilized in our study was sourced from the Hive/MapReduce traces captured from Facebook's 3000-machine cluster, comprising 150 racks. This trace has been widely employed in previous simulations by researchers~\cite{Chowdhury2015, Qiu2015, shafiee2018improved}. The trace encompasses essential details, including the arrival time (measured in milliseconds) of each coflow, the location of the mappers and reducers (specifically, the rack number they belong to), and the amount of shuffle data for each reducer (expressed in Megabytes). The trace dataset consists of a total of 526 coflows.

\subsection{Algorithms}
In the case of $m=1$, Algorithm~\ref{Alg2} achieves approximation ratios of $5$ and $4$ for arbitrary release time and zero release time, respectively, which aligns with the findings reported in Shafiee and Ghaderi~\cite{shafiee2018improved}. Their empirical evaluation has demonstrated that their algorithm outperforms deterministic algorithms used in Varys~\cite{Chowdhury2014}, \cite{qiu2015minimizing}, and \cite{Shafiee2017}. Consequently, this paper focuses on simulating the performance of the algorithms in an identical parallel network ($m>1$). We evaluate the performance of FDLS (Algorithm~\ref{Alg1}) and Weaver~\cite{Huang2020} in the identical parallel network setting. Weaver arranges the flows one by one, classifying them as critical or non-critical, and assigns a network core that minimizes the coflow completion time for critical flows. For non-critical flows, it selects a network core that balances the load among the network cores. Since Weaver only schedules one coflow on the parallel network, we employ our algorithm to determine the order of coflows. We then utilize Weaver to schedule the coflows in the obtained order. In other words, all algorithms schedule coflows in the same order, but the distinction lies in the method employed to minimize the completion time of each coflow.

In our simulations, we compute the approximation ratio by dividing the total weighted completion time obtained from the algorithms by the cost of the feasible dual solution. The feasible dual solution is obtained using Algorithms~\ref{Alg_dual1-1} and \ref{Alg2_dual2-1} and serves as a lower bound on the optimal value of the coflow scheduling problem. It is worth noting that when the cost of the feasible dual solution is significantly lower than the cost of the integer optimal solution, the resulting approximation ratio may exceed the theoretically analyzed approximation ratio. To assign weights to each coflow, we randomly and uniformly select positive integers from the interval $[1, 100]$.

\subsection{Results}
Figure~\ref{fig:ratio4} illustrates the approximation ratio of the proposed algorithm compared to the previous algorithm for synthetic traces where all coflows are released at time 0. The problem size ranges from 5 to 25 coflows in five network cores. The proposed algorithms demonstrate significantly smaller approximation ratios than $5-\frac{2}{m}$. Furthermore, FDLS outperforms Weaver by approximately $2.7\%$ to $7.8\%$ within this problem size range. Additionally, as the number of coflows increases, the approximation ratio decreases. This observation aligns with subsequent findings, indicating that dense instances result in lower approximation ratios.

\begin{figure}[!ht]
    \centering
        \includegraphics[width=3.8in]{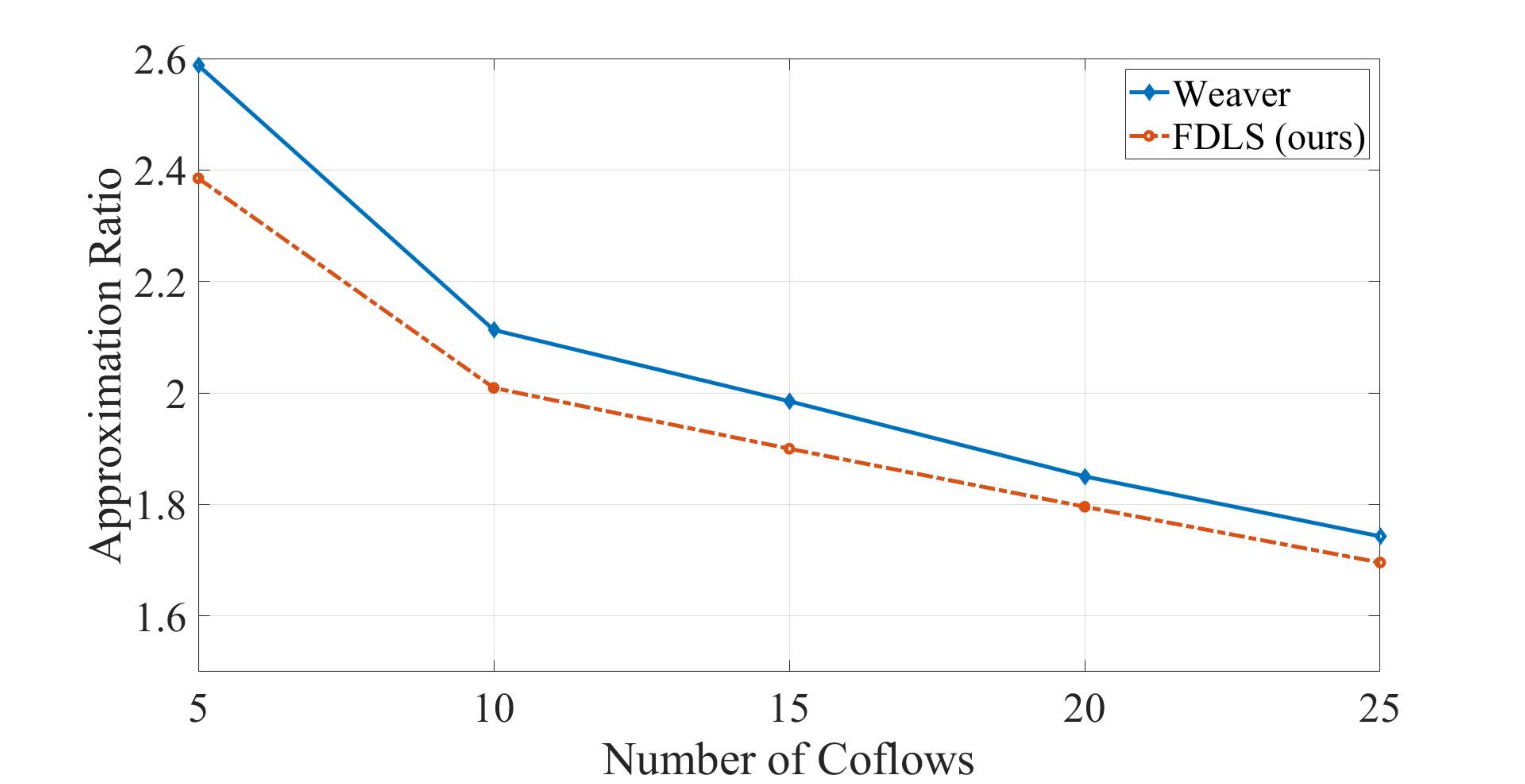}
    \caption{The approximation ratio of synthetic traces between the previous algorithm and the proposed algorithm when
all coflows release at time 0.}
    \label{fig:ratio4}
\end{figure}

Figure~\ref{fig:ratio5} presents the approximation ratio of synthetic traces for 100 random dense and combined instances, comparing the previous algorithm with the proposed algorithm when all coflows are released at time 0. The problem size consists of 25 coflows in five network cores, with input and output links of $N=10$. In the dense case, Weaver achieves an approximation ratio of 1.35, while FDLS achieves an approximation ratio of 1.33, resulting in a $1.45\%$ improvement with Weaver. In the combined case, FDLS slightly outperforms Weaver. Notably, the proposed algorithm demonstrates a larger improvement in the dense case compared to the combined case.

\begin{figure}[!ht]
    \centering
        \includegraphics[width=3.8in]{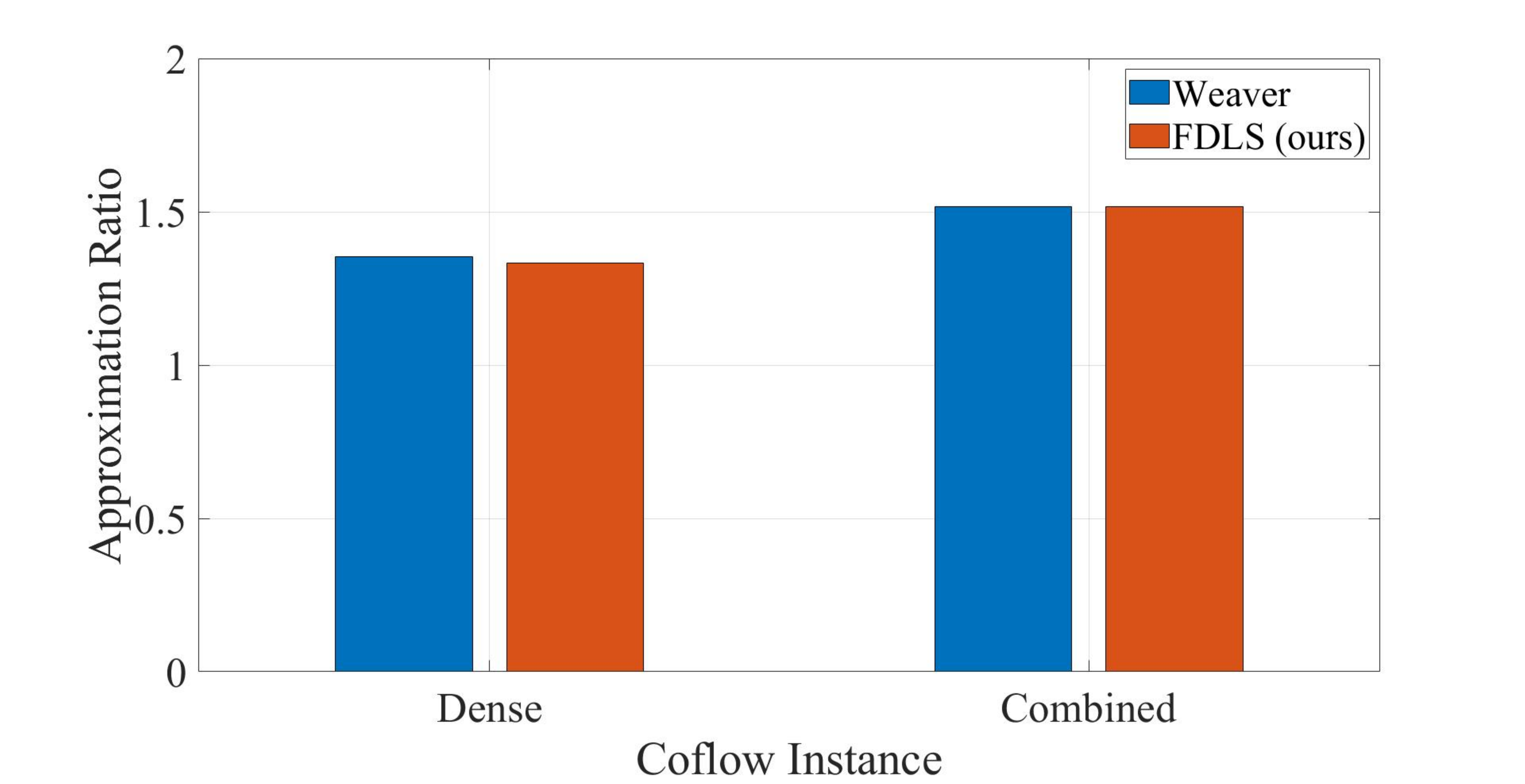}
    \caption{The approximation ratio of synthetic traces between the previous algorithm and the proposed algorithm for different number of coflows when
all coflows release at time 0 for 100 random dense and combined instances.}
    \label{fig:ratio5}
\end{figure}

Figure~\ref{fig:ratio6} depicts the approximation ratio of synthetic traces for varying numbers of network cores, comparing the previous algorithm to the proposed algorithm when all coflows are released at time 0. The problem size consists of 25 coflows distributed across 5 to 25 network cores, with input and output links of $N=10$. The proposed algorithm consistently achieves much smaller approximation ratios compared to the bound $5-\frac{2}{m}$. As the number of network cores increases, the approximation ratio also increases. This trend is attributed to the widening gap between the cost of the feasible dual solution and the cost of the integer optimal solution as the number of network cores grows. Consequently, there is an amplified disparity between the experimental approximation ratio and the actual approximation ratio. Notably, FDLS outperforms Weaver by approximately $1.65\%$ to $2.92\%$ across different numbers of network cores.

\begin{figure}[!ht]
    \centering
        \includegraphics[width=3.8in]{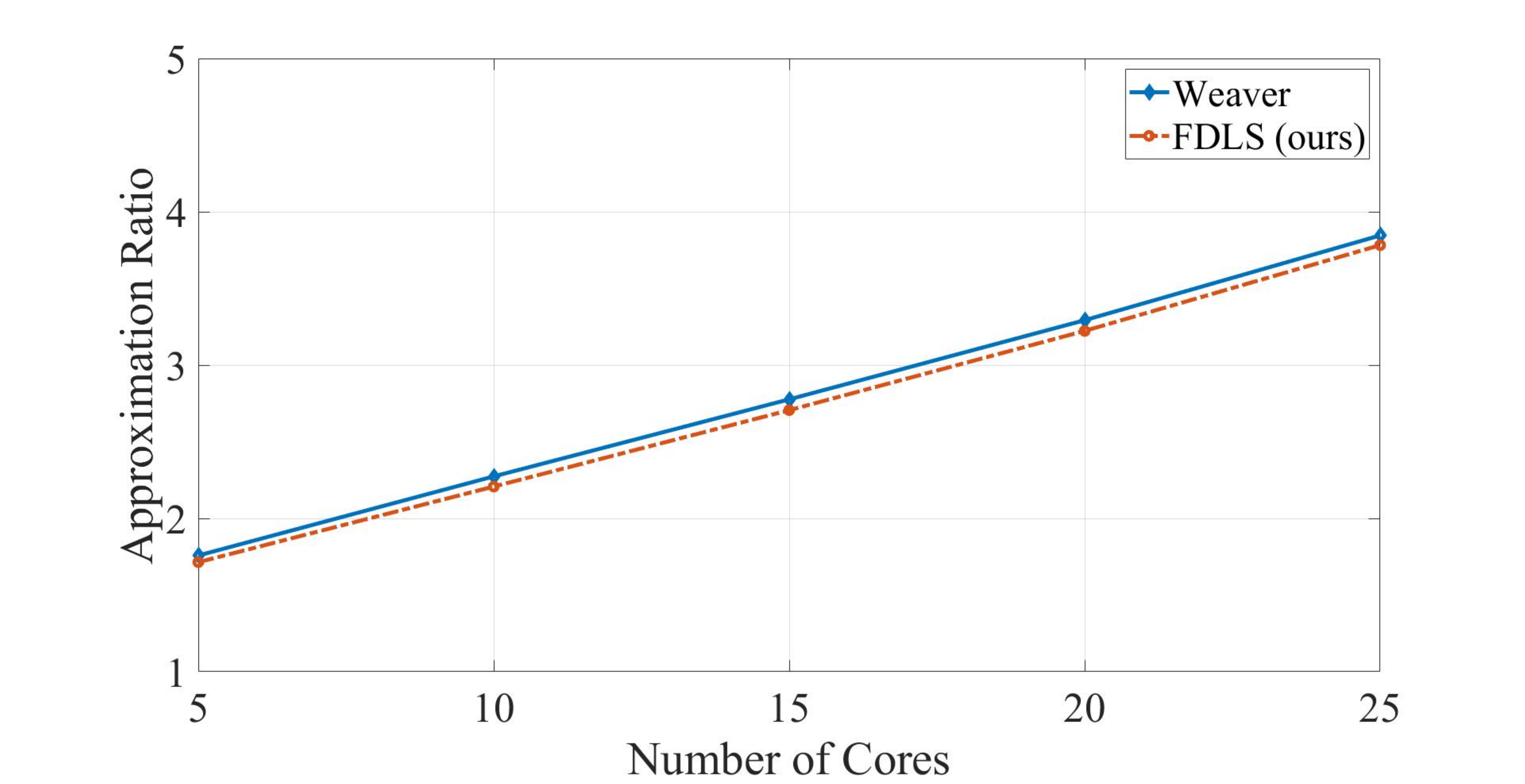}
    \caption{The approximation ratio of synthetic traces between the previous algorithm and the proposed algorithm for different number of network cores when
all coflows release at time 0.}
    \label{fig:ratio6}
\end{figure}

Figure~\ref{fig:ratio7} presents the approximation ratio of real traces for various thresholds of the number of flows, comparing the previous algorithm to the proposed algorithm when all coflows are released at time 0. Specifically, we apply a filter to the set of coflows based on the condition that the number of flows is greater than or equal to the threshold. The problem size comprises a randomly selected a set of 526 coflows distributed across five network cores, with input and output links of $N = 150$. Remarkably, FDLS outperforms Weaver by approximately $6.08\%$ to $14.69\%$ across different thresholds. Furthermore, as the number of flows increases, the approximation ratio decreases. This finding aligns with our previous observation, which indicates that dense instances yield lower approximation ratios.

\begin{figure}[!ht]
    \centering
        \includegraphics[width=3.8in]{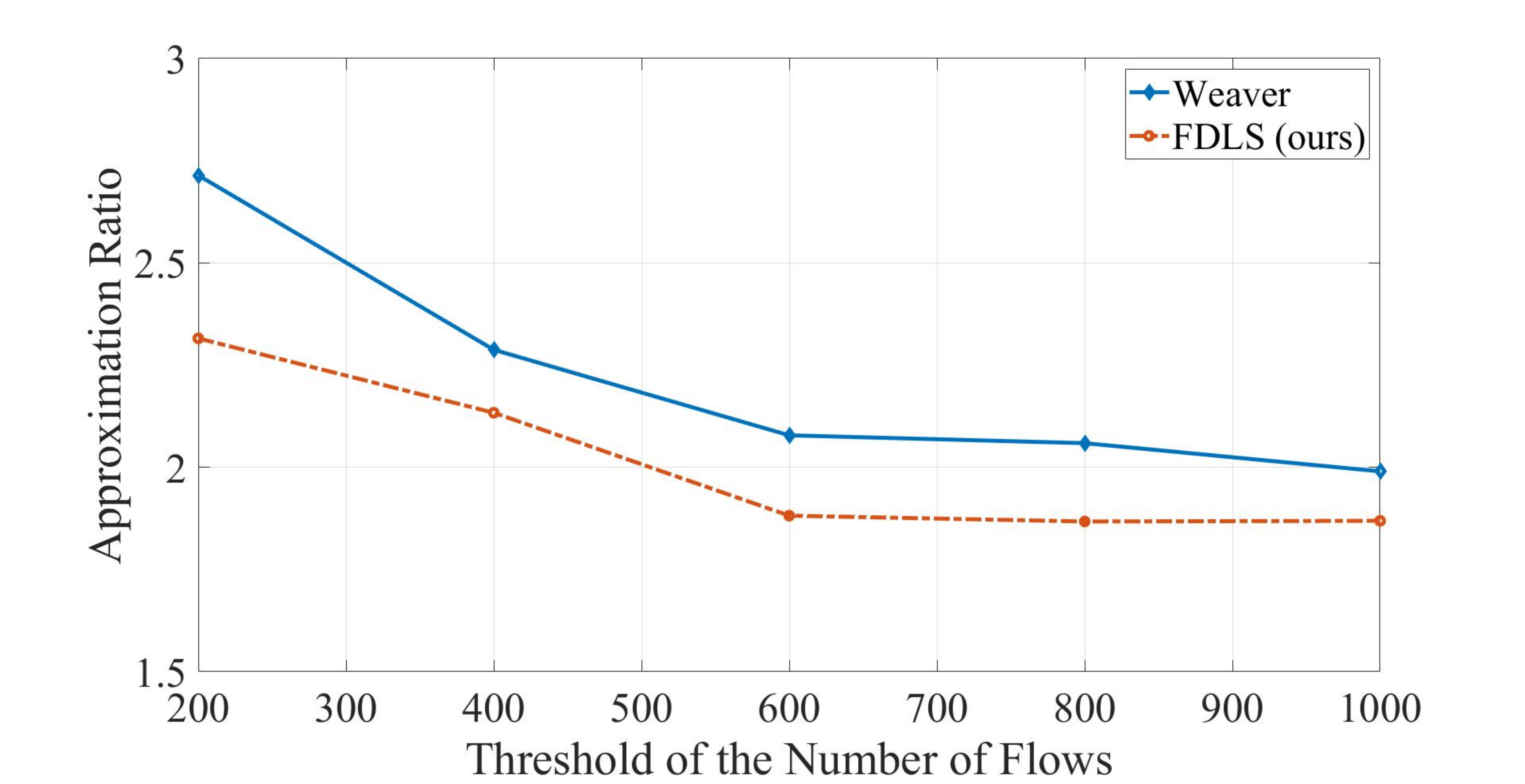}
    \caption{The approximation ratio of real trace between the previous algorithm and the proposed algorithm for different threshold of the number of flows when
all coflows release at time 0.}
    \label{fig:ratio7}
\end{figure}

Figure~\ref{fig:ratio9} illustrates the cumulative distribution function (CDF) plots of the coflow completion time for the previous algorithm and the proposed algorithm when all coflows are released at time 0. The problem size consists of 15 coflows in five network cores, with input and output links of $N=10$. As shown in the figure, $92.86\%$ of the coflow completion time achieved by FDLS is below 15.936 seconds, whereas Weaver's completion time is 17.824 seconds. Moreover, the area under the CDF plot of MFDLS is larger than the area under the CDF plot of Weaver, providing further evidence of FDLS's superiority over Weaver.
\begin{figure}[!ht]
    \centering
        \includegraphics[width=3.8in]{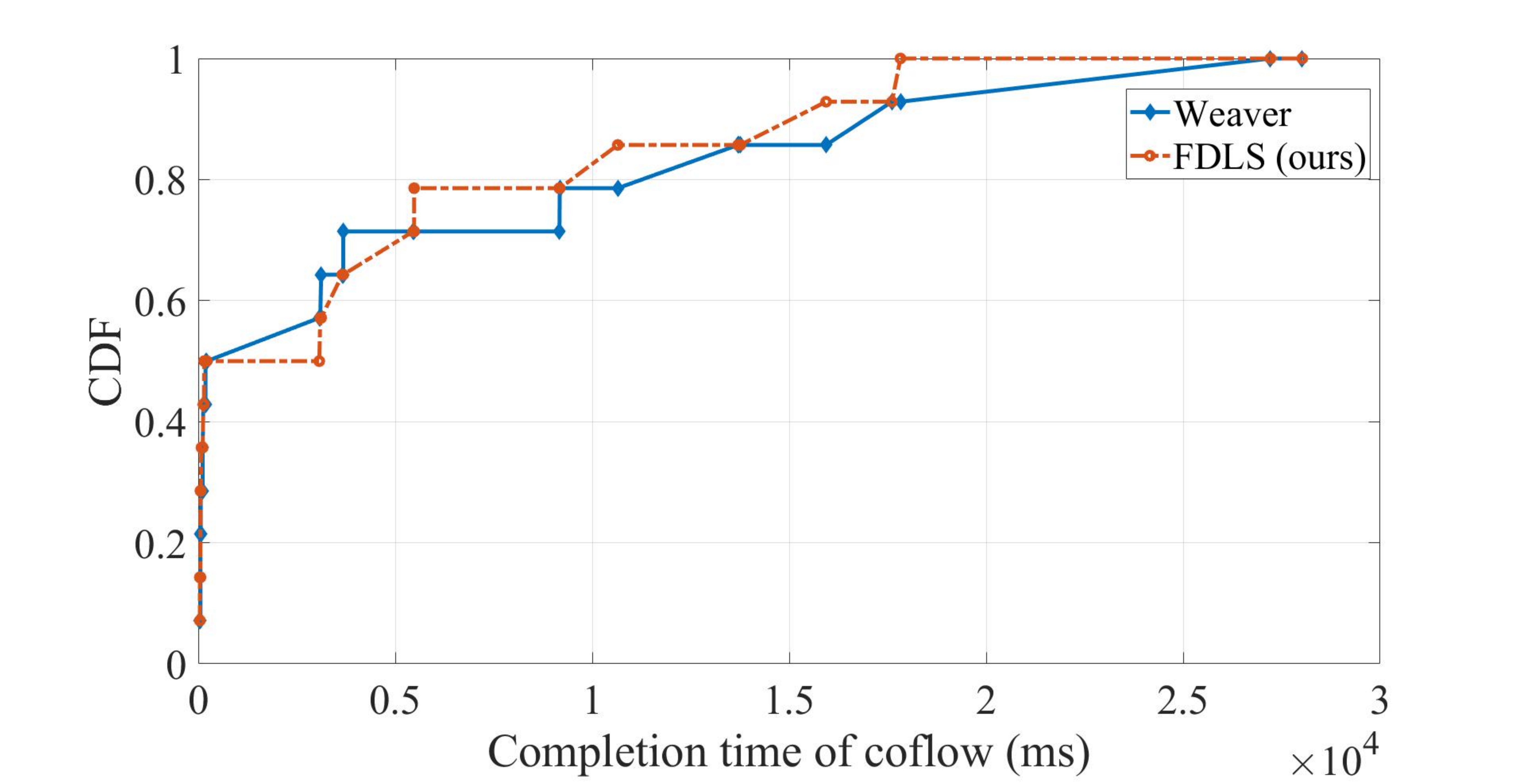}
    \caption{CDF of coflow completion time under the previous algorithm and the proposed algorithm for synthetic traces when all coflows release at time 0.}
    \label{fig:ratio9}
\end{figure}

Figure~\ref{fig:ratio8} presents a box plot of the approximation ratio for synthetic traces, comparing the previous algorithm and the proposed algorithm when all coflows are released at time 0. The problem size consists of 25 coflows in five network cores, with input and output links of $N=10$. In FDLS, the first quartile (Q1), median, and third quartile (Q3) values are 1.6234, 1.7056, and 1.7932, respectively. The maximum and minimum values are 2.0746 and 1.3758, respectively. This result demonstrates that the proposed algorithm achieves a significantly smaller approximation ratio than $5-\frac{2}{m}$. For Weaver, the Q1, median, and Q3 values are 1.6459, 1.7603, and 1.8511, respectively, with maximum and minimum values of 2.0851 and 1.3741, respectively. This result indicates that FDLS is more stable and outperforms Weaver.

\begin{figure}[!ht]
    \centering
        \includegraphics[width=3.8in]{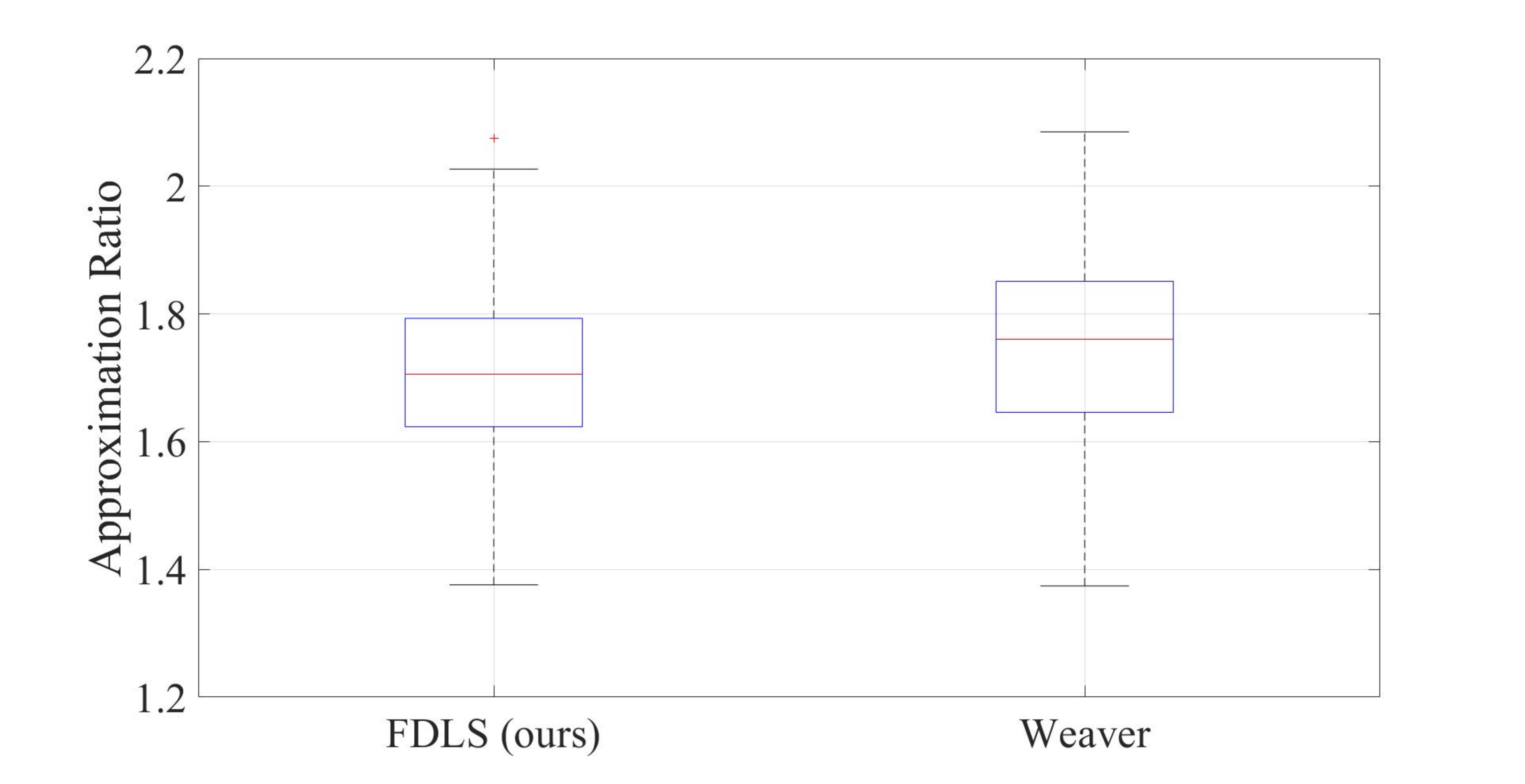}
    \caption{Box plot of the approximation ratio under the previous algorithm and the proposed algorithm for synthetic traces when all coflows release at time 0.}
    \label{fig:ratio8}
\end{figure}

Figure~\ref{fig:ratio10} illustrates the approximation ratio of the coflow-driven-list-scheduling (CDLS) algorithm for synthetic traces when all coflows are released at time 0. The problem size ranges from 5 to 25 coflows in five network cores, with input and output links of $N=10$. The proposed algorithm achieves significantly smaller approximation ratios than $4m$, indicating excellent performance. Furthermore, as the number of coflows increases, the approximation ratio decreases. This observation aligns with our previous finding, suggesting that dense instances result in lower approximation ratios.

\begin{figure}[!ht]
    \centering
        \includegraphics[width=3.8in]{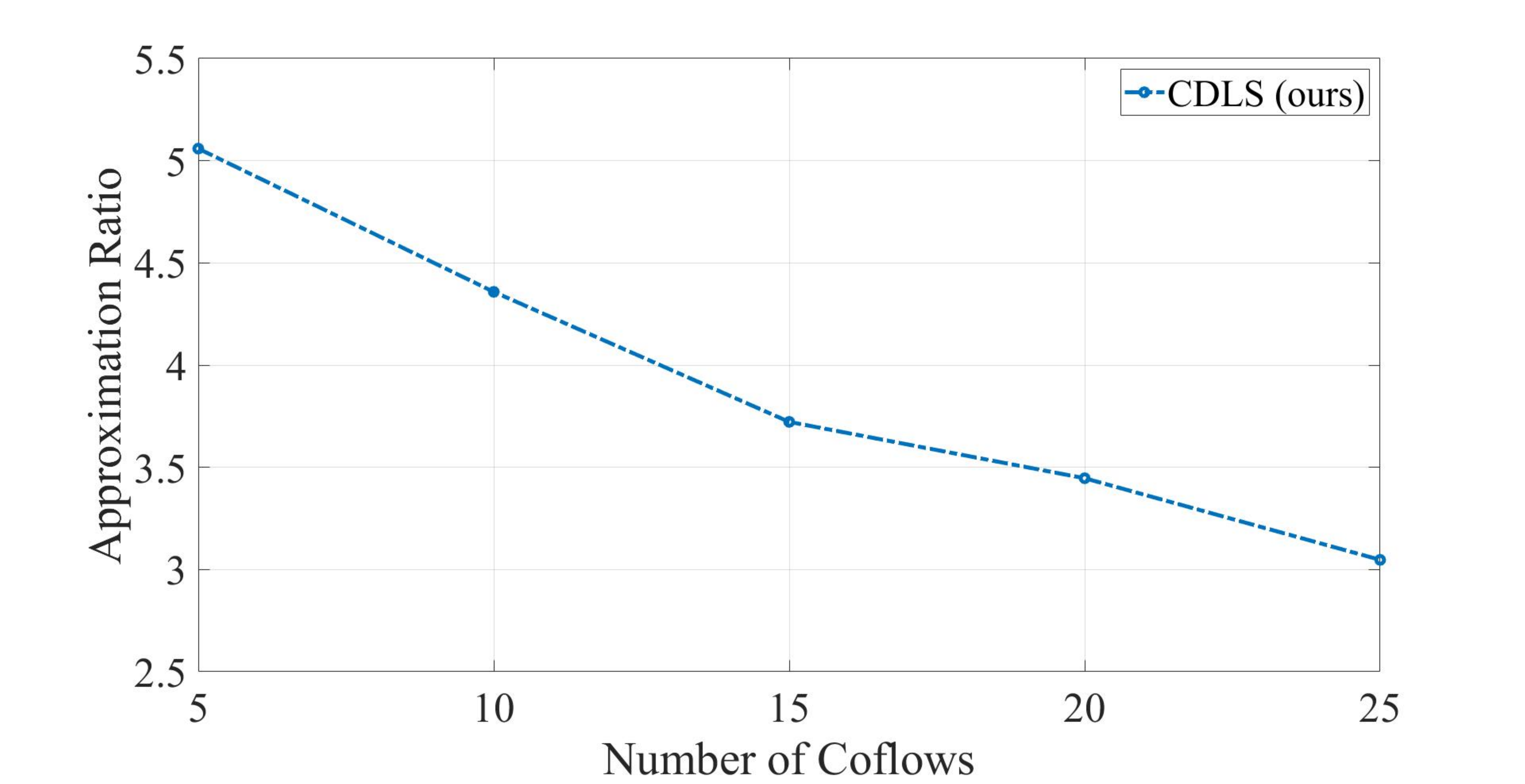}
    \caption{The approximation ratio of coflow-driven-list-scheduling (CDLS) algorithm for different number of coflows when all coflows release at time 0.}
    \label{fig:ratio10}
\end{figure}

Figure~\ref{fig:ratio12} depicts the approximation ratio of the CDLS algorithm for varying numbers of network cores when all coflows are released at time 0. The problem size consists of 25 coflows distributed across 5 to 25 network cores, with input and output links of $N=10$. The proposed algorithm consistently achieves much smaller approximation ratios compared to the bound $4m$. As the number of network cores increases, the approximation ratio also increases. This is consistent with our theoretical results, showing a linear relationship between the approximation ratio and the number of network cores.

\begin{figure}[!ht]
    \centering
        \includegraphics[width=3.8in]{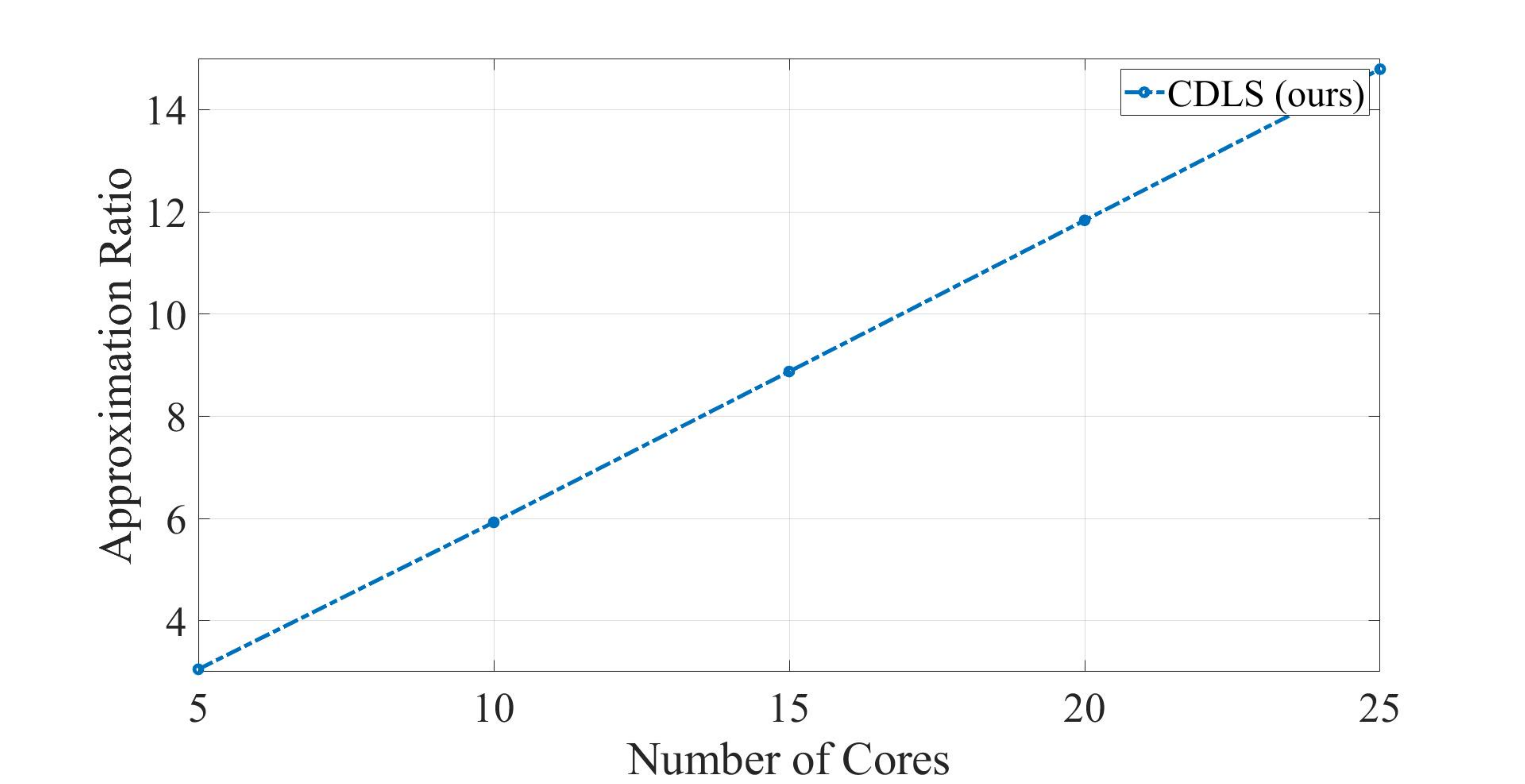}
    \caption{The approximation ratio of coflow-driven-list-scheduling (CDLS) algorithm for different number of network cores when
all coflows release at time 0.}
    \label{fig:ratio12}
\end{figure}

Figure~\ref{fig:ratio11} displays the box plot of the approximation ratio of the CDLS algorithm for synthetic traces when all coflows are released at time 0. The problem size consists of 25 coflows in five network cores, with input and output links of $N=10$. The CDLS algorithm achieves Q1, median, and Q3 values of 2.8731, 3.0426, and 3.2563, respectively. Additionally, the maximum and minimum values are 3.692 and 2.1815, respectively. These results demonstrate that the CDLS algorithm achieves a significantly smaller approximation ratio than $4m$.

\begin{figure}[!ht]
    \centering
        \includegraphics[width=3.8in]{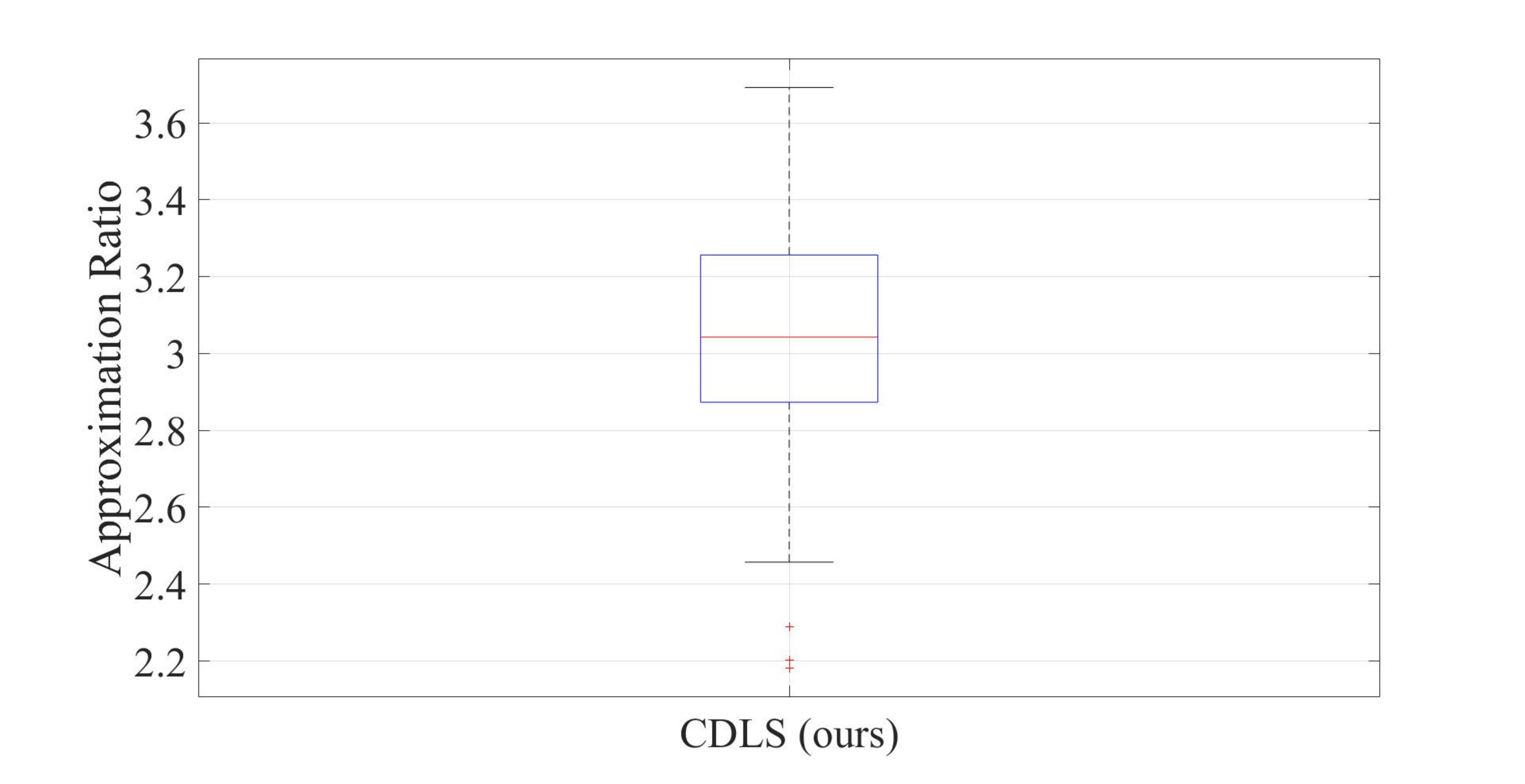}
    \caption{Box plot of the approximation ratio under the coflow-driven-list-scheduling (CDLS) algorithm for synthetic traces when all coflows release at time 0.}
    \label{fig:ratio11}
\end{figure}

\section{Concluding Remarks}\label{sec:Conclusion}
In recent years, with advancements in technology, the traditional single-core model has become insufficient. As a result, we have shifted our focus to studying identical parallel networks, which utilize multiple parallel-operating network cores. Previous approaches relied on linear programming to solve scheduling order. Although the linear program can be solved in polynomial time using the ellipsoid method, it requires exponentially more constraints, necessitating sufficient memory to store them. Therefore, this paper improves the efficiency of solving by employing the primal-dual method. The primal-dual algorithm has a space complexity of $O(Nn)$ and a time complexity of $O(n^2)$. We investigate both flow-level and coflow-level scheduling problems. Our proposed algorithm for the flow-level scheduling problem achieves an approximation ratio of $6-\frac{2}{m}$ with arbitrary release times and $5-\frac{2}{m}$ without release time. For the coflow-level scheduling problem, we obtain an approximation ratio of $4m+1$ with arbitrary release times and $4m$ without release time.

\appendix
\section{A Simple and Equivalent Algorithm for Algorithm~\ref{Alg_dual}}\label{appendix:a}
This section presents a simple algorithm, referred to as Algorithm~\ref{Alg_dual1-1}, which is also equivalent to Algorithm~\ref{Alg_dual}. The concept of this algorithm is to accumulate the value of $\delta$. Additionally, this algorithm simultaneously calculates the feasible dual solution $opt$, which serves as the lower bound for OPT. However, in practical applications, there is no need to compute $opt$. This value is only used in our experiments. Therefore, when applied in practice, the calculations for $opt$, $\mathcal{G}_{i}$, and $\mathcal{G}_{j}$ can all be ignored. The space complexity of this algorithm is $O(Nn)$, and the time complexity is $O(n^2)$, where $N$ represents the number of input/output ports and $n$ represents the number of coflows.
\begin{algorithm}
\caption{Permuting Coflows}
    \begin{algorithmic}[1]
		    \STATE $\mathcal{K}$ is the set of unscheduled coflows and initially $K=\left\{1,2,\ldots,n\right\}$
				\STATE $\mathcal{G}_{i}=\left\{(i, j, k)| d_{i,j,k}>0, \forall k\in \mathcal{K}, \forall j\in \mathcal{J} \right\}$ 
				\STATE $\mathcal{G}_{j}=\left\{(i, j, k)| d_{i,j,k}>0, \forall k\in \mathcal{K}, \forall i\in \mathcal{I} \right\}$
				\STATE $L_{i,k}=\sum_{j\in \mathcal{J}} d_{i,j,k}$ for all $k\in \mathcal{K}, i\in \mathcal{I}$
				\STATE $L_{j,k}=\sum_{i\in \mathcal{I}} d_{i,j,k}$ for all $k\in \mathcal{K}, j\in \mathcal{J}$
				\STATE $L_{i} = \sum_{k\in \mathcal{K}}L_{i,k}$ for all $i\in \mathcal{I}$
				\STATE $L_{j} = \sum_{k\in \mathcal{K}}L_{j,k}$ for all $j\in \mathcal{J}$		
				\STATE $\delta_{k}=0$ for all $k\in \mathcal{K}$
				\STATE $\alpha=0, \beta= 0, opt=0$
				\FOR{$r=n, n-1, \ldots, 1$}
				    \STATE $\mu_1(r)=\arg\max_{i\in \mathcal{I}} L_{i}$
				    \STATE $\mu_2(r)=\arg\max_{j\in \mathcal{J}} L_{j}$
						\STATE $k=\arg\max_{\ell\in \mathcal{K}} r_{\ell}$
						\IF{$L_{\mu_1(r)}>L_{\mu_2(r)}$}
                \IF{$r_{k}>\frac{\kappa\cdot L_{\mu_1(r)}}{m}$}
						        \STATE $\alpha=w_{k}-\delta_{k}$
										\STATE $opt=opt + \alpha\cdot (r_{k}+\max_{j}d_{\mu_1(r),j,k})$
										\STATE $\sigma(r)\leftarrow k$
						    \ELSIF{$r_{k}\leq\frac{\kappa\cdot L_{\mu_1(r)}}{m}$}
						        \STATE $k'=\arg\min_{k\in \mathcal{K}}\left\{\frac{w_{k}-\delta_{k}}{L_{\mu_1(r),k}}\right\}$
										\STATE $\beta=\frac{w_{k'}-\delta_{k'}}{L_{\mu_1(r),k'}}$
										\STATE $\delta_{k}=\delta_{k}+\beta\cdot L_{\mu_1(r),k}$ for all $k\in \mathcal{K}\setminus k'$
										\STATE $opt=opt + \beta\cdot f(\mathcal{G}_{\mu_1(r)})$
										\STATE $\sigma(r)\leftarrow k'$
						    \ENDIF						
						\ELSE
                \IF{$r_{k}>\frac{\kappa\cdot L_{\mu_2(r)}}{m}$}
						        \STATE $\alpha=w_{k}-\delta_{k}$
										\STATE $opt=opt + \alpha\cdot (r_{k}+\max_{i}d_{i, \mu_2(r),k})$
										\STATE $\sigma(r)\leftarrow k$
						    \ELSIF{$r_{k}\leq\frac{\kappa\cdot L_{\mu_2(r)}}{m}$}
						        \STATE $k'=\arg\min_{k\in \mathcal{K}}\left\{\frac{w_{k}-\delta_{k}}{L_{\mu_2(r),k}}\right\}$
										\STATE $\beta=\frac{w_{k'}-\delta_{k'}}{L_{\mu_2(r),k'}}$
										\STATE $\delta_{k}=\delta_{k}+\beta\cdot L_{\mu_2(r),k}$ for all $k\in \mathcal{K}\setminus k'$
										\STATE $opt=opt + \beta\cdot f(\mathcal{G}_{\mu_2(r)})$
										\STATE $\sigma(r)\leftarrow k'$
						    \ENDIF						
						\ENDIF
						\STATE $\mathcal{K}\leftarrow \mathcal{K}\setminus \sigma(r)$
     				\STATE $\mathcal{G}_{i}=\left\{(i, j, k)| d_{i,j,k}>0, \forall k\in \mathcal{K}, \forall j\in \mathcal{J} \right\}$ 
				    \STATE $\mathcal{G}_{j}=\left\{(i, j, k)| d_{i,j,k}>0, \forall k\in \mathcal{K}, \forall i\in \mathcal{I} \right\}$
    				\STATE $L_{i} = L_{i}-L_{i,\sigma(r)}$ for all $i\in \mathcal{I}$
		    		\STATE $L_{j} = L_{j}-L_{j,\sigma(r)}$ for all $j\in \mathcal{J}$				
				\ENDFOR
   \end{algorithmic}
\label{Alg_dual1-1}
\end{algorithm}

\section{A Simple and Equivalent Algorithm for Algorithm~\ref{Alg2_dual}}\label{appendix:b}
This section presents a simple algorithm, referred to as Algorithm~\ref{Alg2_dual2-1}, which is also equivalent to Algorithm~\ref{Alg2_dual}. The concept of this algorithm is the same as Algorithm~\ref{Alg_dual1-1}, which is to accumulate the value of $\delta$. The space complexity of this algorithm is $O(Nn)$, and the time complexity is $O(n^2)$, where $N$ represents the number of input/output ports and $n$ represents the number of coflows.
\begin{algorithm}
\caption{Permuting Coflows}
    \begin{algorithmic}[1]
		    \STATE $\mathcal{K}$ is the set of unscheduled coflows and initially $K=\left\{1,2,\ldots,n\right\}$
				\STATE $L_{i,k}=\sum_{j\in \mathcal{J}} d_{i,j,k}$ for all $k\in \mathcal{K}, i\in \mathcal{I}$
				\STATE $L_{j,k}=\sum_{i\in \mathcal{I}} d_{i,j,k}$ for all $k\in \mathcal{K}, j\in \mathcal{J}$
				\STATE $L_{i} = \sum_{k\in \mathcal{K}}L_{i,k}$ for all $i\in \mathcal{I}$
				\STATE $L_{j} = \sum_{k\in \mathcal{K}}L_{j,k}$ for all $j\in \mathcal{J}$		
				\STATE $\delta_{k}=0$ for all $k\in \mathcal{K}$
				\STATE $\alpha=0, \beta= 0, opt=0$
				\FOR{$r=n, n-1, \ldots, 1$}
				    \STATE $\mu_1(r)=\arg\max_{i\in \mathcal{I}} L_{i}$
				    \STATE $\mu_2(r)=\arg\max_{j\in \mathcal{J}} L_{j}$
						\STATE $k=\arg\max_{\ell\in \mathcal{K}} r_{\ell}$
						\IF{$L_{\mu_1(r)}>L_{\mu_2(r)}$}
                \IF{$r_{k}>\frac{\kappa\cdot L_{\mu_1(r)}}{m}$}
						        \STATE $\alpha=w_{k}-\delta_{k}$
										\STATE $opt=opt + \alpha\cdot (r_{k}+L_{\mu_1(r),k})$
										\STATE $\sigma(r)\leftarrow k$
						    \ELSIF{$r_{k}\leq\frac{\kappa\cdot L_{\mu_1(r)}}{m}$}
						        \STATE $k'=\arg\min_{k\in \mathcal{K}}\left\{\frac{w_{k}-\delta_{k}}{L_{\mu_1(r),k}}\right\}$
										\STATE $\beta=\frac{w_{k'}-\delta_{k'}}{L_{\mu_1(r),k'}}$
										\STATE $\delta_{k}=\delta_{k}+\beta\cdot L_{\mu_1(r),k}$ for all $k\in \mathcal{K}\setminus k'$
										\STATE $opt=opt + \beta\cdot f_{\mu_1(r)}(\mathcal{K})$
										\STATE $\sigma(r)\leftarrow k'$
						    \ENDIF						
						\ELSE
                \IF{$r_{k}>\frac{\kappa\cdot L_{\mu_2(r)}}{m}$}
						        \STATE $\alpha=w_{k}-\delta_{k}$
										\STATE $opt=opt + \alpha\cdot (r_{k}+L_{\mu_2(r),k})$
										\STATE $\sigma(r)\leftarrow k$
						    \ELSIF{$r_{k}\leq\frac{\kappa\cdot L_{\mu_2(r)}}{m}$}
						        \STATE $k'=\arg\min_{k\in \mathcal{K}}\left\{\frac{w_{k}-\delta_{k}}{L_{\mu_2(r),k}}\right\}$
										\STATE $\beta=\frac{w_{k'}-\delta_{k'}}{L_{\mu_2(r),k'}}$
										\STATE $\delta_{k}=\delta_{k}+\beta\cdot L_{\mu_2(r),k}$ for all $k\in \mathcal{K}\setminus k'$
										\STATE $opt=opt + \beta\cdot f_{\mu_2(r)}(\mathcal{K})$
										\STATE $\sigma(r)\leftarrow k'$
						    \ENDIF						
						\ENDIF
						\STATE $\mathcal{K}\leftarrow \mathcal{K}\setminus \sigma(r)$
    				\STATE $L_{i} = L_{i}-L_{i,\sigma(r)}$ for all $i\in \mathcal{I}$
		    		\STATE $L_{j} = L_{j}-L_{j,\sigma(r)}$ for all $j\in \mathcal{J}$				
				\ENDFOR
   \end{algorithmic}
\label{Alg2_dual2-1}
\end{algorithm}

\end{document}